\newtheorem{proposition}{Proposition}
\newtheorem{definition}{Definition}
\newtheorem{theorem}{Theorem}
\newtheorem{lemma}{Lemma}
\newtheorem{conjecture}{Conjecture}
\newtheorem{remark}{Remark}
\DeclarePairedDelimiter\ceil{\lceil}{\rceil}
 \DeclareMathOperator{\diag}{diag}
\title{Robust Sparse Recovery with Sparse Bernoulli Matrices via Expanders}
\author[1]{Pedro Abdalla}
\email{pedro.abdallateixeira@ifor.math.ethz.ch}
\address{Department of Mathematics, ETH Z\"urich}
\date{December 2021}
\begin{document}

\maketitle


\begin{abstract}
Sparse binary matrices are of great interest in the field of compressed sensing. This class of matrices make possible to perform signal recovery with lower storage costs and faster decoding algorithms. In particular, random matrices formed by i.i.d Bernoulli $p$ random variables are of practical relevance in the context of nonnegative sparse recovery.

In this work, we investigate the robust nullspace property of sparse Bernoulli $p$ matrices. Previous results in the literature establish that such matrices can accurately recover $n$-dimensional $s$-sparse vectors with $m=O\left (\frac{s}{c(p)}\log\frac{en}{s}\right )$ measurements, where $c(p) \le p$ is a constant that only depends on the parameter $p$. These results suggest that, when $p$ vanishes, the sparse Bernoulli matrix requires considerably more measurements than the minimal necessary achieved by the standard isotropic subgaussian designs. We show that this is not true. Our main result characterizes, for a wide range of sparsity levels $s$, the smallest $p$ such that it is possible to perform sparse recovery with the minimal number of measurements. We also provide matching lower bounds to establish the optimality of our results.
\end{abstract}


\section{Introduction}
The theory of compressed sensing, introduced in the seminal works \cite{candes2006robust,candes2006stable,donoho2006compressed}, predicts that a few samples are sufficient to recover a sparse vector from linear corrupted measurements. In a mathematical framework, the goal is to recover an $s$-sparse vector $x \in \mathbb{R}^n$ from linear corrupted measurements $y=Ax +e \in \mathbb{R}^m$, where $A\in \mathbb{R}^{m\times n}$ is a known matrix often called "measurement" matrix and $e$ is a noisy vector satisfying $\|e\|_2 \le \eta$, for some constant $\eta >0$. The number of rows $m$ is referred as the number of measurements and it is often assumed to be much smaller than the ambient dimension, i.e, $m \ll n$.

It is well established in the literature that $m=\Theta(s\log\frac{en}{s})$ measurements are sufficient and necessary to recover an $s$-sparse vector $x \in \mathbb{R}^n$ \cite{foucart2013invitation}. Precisely, when the measurement matrix $A$ fulfill appropriate conditions, for example a robust version of the nullspace property (NSP), the target sparse vector can be recovered via the so called quadratically constrained basis pursuit (QCBP), the optimization program 

\begin{equation}
\label{QCBP}
\hat{x}= \arg \min_{z}  \|z\|_1      \quad   \text{subject to}\quad \|Az-y\|_2\leq \eta.
\end{equation}

To date, every measurement matrix that requires the minimal number of measurements, $m=\Theta(s\log\frac{en}{s})$, to perform sparse recovery is generated at random. It is well known that matrices with i.i.d isotropic subgaussian rows satisfy the so called $\ell_2$ restricted isometry property (RIP) with high probability, a sufficient, but not necessary, condition to establish recovery guarantees via the optimization program \eqref{QCBP} \cite{foucart2013invitation}. The assumption that the rows are subgaussian was significantly relaxed to mild moment assumptions in a line of work based on the Mendelson's small ball method \cite{lecue2017sparse,dirksen2016gap,tropp2015convex}. Another line of work, focused on additional desired properties of the randomly generated measurement matrices such as sparsity. In this sense, sparse binary matrices were considered and random matrices with i.i.d  entries following a Bernoulli $p$ distribution would be natural model to generate sparse binary matrices. The first result in this direction is due to Chandar \cite{chandar2008negative}, who proved that any binary matrix cannot satisfy the RIP property with the number of rows proportional to $s\log(\frac{en}{s})$. However, Berinde, Gilbert, Indyk, Karloff and Strauss \cite{berinde2008combining} introduced a class of sparse measurement matrices that perform sparse recovery with $m=\Theta(s\log\frac{en}{s})$ measurements via the so called lossless expanders. In a nutshell, these are left regular bipartite graphs with adjacency matrix that satisfies an $\ell_1$ version of the restricted isometry property. Moreover, it is possible to prove that a random left $d$-regular bipartite graph, for an appropriate value of $d$, is a lossless expander with high probability, see \cite{foucart2013invitation} and the references therein. Nachin \cite{nachin2021lower} showed that under the assumption that $m=\Theta(s\log\frac{en}{s})$, such matrices achieve the minimum possible column sparsity (number of ones per column) up to an absolute constant.
In a similar spirit, Prasad and Rudelson \cite{kasiviswanathan2019restricted} investigated the restricted isometry property of sparse subgaussian matrices, i.e, matrices generated by the Hadamard product (entrywise product) between a random matrix with mean zero isotropic subgaussian entries and a Bernoulli $p$ matrix. They obtained that $m=\Theta(\frac{s}{p}\log\frac{en}{s})$ measurements are enough to establish the RIP property and consequently the NSP property.

\subsection{Nonnegative Compressed Sensing}
The line of work in nonnegative compressed sensing aims to study recovery guarantees under the assumption that the target vector is nonnegative or the measurement matrix is nonnegative. Donoho and Tanner \cite{donoho2005sparse} studied the central question: Which assumptions on the measurement matrix $A$ are necessary to ensure the uniqueness of sparse solution under the nonnegative assumption? Mathematically we need,
\begin{equation*}
    \{z| Ax=Az \ \& \  z\ge 0\} = \{x\}.
\end{equation*}
To answer this question, the authors introduced the notion of outwardly $s$-neighborly polytopes: Consider $P_{A}$ the polytope formed by the convex hull of the columns of $A$, the polytope $P_A$ is called $s$-neighborly if every set of $s$ vertices spans a face of $P_A$. If it happens, the polytope $P_A^{0}:=P_A\cup \{0\}$ is called outwardly $s$-neighborly. They proved that such condition is necessary and sufficient to establish perfect recovery in the noiseless case. Another approach to the same question is due to Bruckstein, Elad and Zibulevsky \cite{bruckstein2008uniqueness}. They considered a full rank measurement matrix $A$ and introduced the notion of positive orthant condition. 
\begin{equation*}
    \mathcal{M}^{+}:= \{A: \exists t \in \mathbb{R}^m \ \text{such that} \ A^{T}t >0\}.
\end{equation*}
The authors analysed the performance of \eqref{QCBP} with the additional constraint $x\ge 0$ (all entries of $x$ are nonnegative) via coherence type analysis. They obtained numerical evidence for unique recovery for $s=O(\sqrt{n})$. Wang, Xu and Tang \cite{wang2010unique}, motivated by applications in network inference problems, analysed the signal recovery problem with binary matrices formed by i.i.d Bernoulli entries via a RIP-based technique. Precisely, they considered a matrix $A_1 \in \{0,1\}^{m\times n}$ formed by i.i.d Bernoulli $\frac{1}{2}$ entries concatenated with a row of all ones $\mathbf{1}_n$, so their measurement matrix is $A^{T}:=[\mathbf{1}_n^T|A_1^T] \in \{0,1\}^{(m+1)\times n}$. Observe that every vector in the kernel of $A$ is orthogonal to all ones vector $\mathbf{1}_n$. Besides, it is easy to see that every vector orthogonal to all ones vector $\mathbf{1}_n$ belongs to the kernel of $A_1$ if and only if it belongs to the kernel of $B:=2A_1 - J$, where $J \in \{1\}^{m\times n}$ (all ones matrix). We have just obtained that $Ker(A) \subset Ker(B)$. Since the entries of $B$ are i.i.d random signs, $B$ satisfies the standard RIP property and consequently the NSP property is transferred from the matrix $B$ to the matrix $A$. The downside of this approach is that it only provides guarantees for the noiseless setting and the measurement matrix is not really sparse. Kueng and Jung in \cite{kueng2017robust}, motivated by concrete applications in wireless network detection, analysed the measurement matrices formed by i.i.d entries Bernoulli $p$ using Mendelson's small ball method. In order to describe their results, we first introduce the main theorem related to the small ball method.

\begin{proposition}{\cite{tropp2015convex,MR3367000}}
Fix a set $E \subset \mathbb{R}^n$ and let $\mathbf{a}_1,\ldots,\mathbf{a}_m$ be i.i.d copies of the random vector $\mathbf{a} \in \mathbb{R}^n$. Set $\mathbf{h}:= \frac{1}{\sqrt{m}}\sum_{i=1}^m \varepsilon_i \mathbf{a}_i $, where $\varepsilon_1,\ldots,\varepsilon_m$ are i.i.d Rademacher (random signs) random variables. For $\xi>0$, define 
\begin{equation*}
    Q_{\xi}(E,\mathbf{a}) = \inf_{\mathbf{u}\in E} \mathbb{P}(|\langle \mathbf{a},\mathbf{u} \rangle | \ge \xi) \quad \text{and} \quad W_m(E,\mathbf{a}) = \mathbb{E} \sup_{\mathbf{u} \in E} \langle \mathbf{h},\mathbf{u} \rangle.
\end{equation*}
Then, for any $\xi >0$ and $t\ge 0$, with probability at least $1-e^{-2t^2}$,
\begin{equation*}
    \inf_{\mathbf{v} \in E} \left (\sum_{i=1}^m |\langle \mathbf{a}_k,\mathbf{v}\rangle|^2 \right)^{1/2} \ge \xi \sqrt{m}Q_{2\xi}(E,\mathbf{a}) - \xi t - 2W_m(E,\mathbf{a}).
\end{equation*}
\end{proposition}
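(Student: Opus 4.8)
The plan is to follow Mendelson's small-ball argument, reducing the lower bound on the quadratic sum to a concentration estimate for an empirical process built from a Lipschitz surrogate of an indicator. The starting point is an elementary pointwise inequality: fix $\mathbf{v}\in E$ and let $k=\#\{i:|\langle\mathbf{a}_i,\mathbf{v}\rangle|\ge\xi\}$. Then $\sum_{i=1}^m|\langle\mathbf{a}_i,\mathbf{v}\rangle|^2\ge\xi^2 k$, so $\bigl(\sum_i|\langle\mathbf{a}_i,\mathbf{v}\rangle|^2\bigr)^{1/2}\ge\xi\sqrt{k}\ge\frac{\xi}{\sqrt{m}}\,k$, where the last step uses $k\le m$. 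Hence the quantity of interest is bounded below by $\frac{\xi}{\sqrt m}\sum_i\mathbf{1}_{\{|\langle\mathbf{a}_i,\mathbf{v}\rangle|\ge\xi\}}$, and it remains to control this count uniformly over $\mathbf{v}\in E$.

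Since indicators are discontinuous, I would pass to a Lipschitz surrogate $\psi:\mathbb{R}\to[0,1]$ that vanishes on $[-\xi,\xi]$, equals $1$ outside $[-2\xi,2\xi]$, and interpolates linearly, so that $\mathbf{1}_{\{|t|\ge 2\xi\}}\le\psi(t)\le\mathbf{1}_{\{|t|\ge\xi\}}$ and $\psi$ is $\tfrac1\xi$-Lipschitz with $\psi(0)=0$. Replacing the indicator by $\psi$ only decreases the sum, and taking expectations gives $\mathbb{E}\psi(\langle\mathbf{a},\mathbf{v}\rangle)\ge\mathbb{P}(|\langle\mathbf{a},\mathbf{v}\rangle|\ge2\xi)\ge Q_{2\xi}(E,\mathbf{a})$ for every $\mathbf{v}\in E$. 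Writing $F=\sup_{\mathbf{v}\in E}\frac1{\sqrt m}\sum_i\bigl(\mathbb{E}\psi(\langle\mathbf{a}_i,\mathbf{v}\rangle)-\psi(\langle\mathbf{a}_i,\mathbf{v}\rangle)\bigr)$ and centering, one obtains the deterministic reduction $\inf_{\mathbf{v}\in E}\bigl(\sum_i|\langle\mathbf{a}_i,\mathbf{v}\rangle|^2\bigr)^{1/2}\ge\xi\bigl(\sqrt m\,Q_{2\xi}(E,\mathbf{a})-F\bigr)$.

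It then suffices to prove that $F\le\mathbb{E}F+t$ with probability at least $1-e^{-2t^2}$ and that $\xi\,\mathbb{E}F\le 2W_m(E,\mathbf{a})$. The first is a bounded-differences estimate: since $\psi\in[0,1]$, altering a single $\mathbf{a}_j$ changes $F$ by at most $1/\sqrt m$, so with $\sum_{i=1}^m c_i^2=m\cdot(1/\sqrt m)^2=1$ McDiarmid's inequality gives $\mathbb{P}(F\ge\mathbb{E}F+t)\le e^{-2t^2}$; multiplying through by $\xi$ produces exactly the $-\xi t$ term. For the expectation I would symmetrize, using the one-sided bound $\mathbb{E}F\le\frac2{\sqrt m}\,\mathbb{E}\sup_{\mathbf{v}\in E}\sum_i\varepsilon_i\psi(\langle\mathbf{a}_i,\mathbf{v}\rangle)$, and then apply the Ledoux--Talagrand contraction principle conditionally on the $\mathbf{a}_i$. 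Since $\xi\psi$ is a $1$-Lipschitz contraction with $\xi\psi(0)=0$, the $\psi$ is stripped off at the cost of the factor $1/\xi$, leaving $\mathbb{E}F\le\frac{2}{\xi\sqrt m}\,\mathbb{E}\sup_{\mathbf{v}\in E}\sum_i\varepsilon_i\langle\mathbf{a}_i,\mathbf{v}\rangle=\frac2\xi\,\mathbb{E}\sup_{\mathbf{v}\in E}\langle\mathbf{h},\mathbf{v}\rangle=\frac2\xi W_m(E,\mathbf{a})$. Combining the three displays yields the claim with probability at least $1-e^{-2t^2}$.

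I expect the contraction step to be the crux. One must choose $\psi$ with exactly the right Lipschitz constant and normalization so that the linearization lands on the Rademacher width $W_m$ rather than on a count; in particular the $\tfrac1\xi$-Lipschitz scaling is what cancels the leading $\xi$ and leaves the clean factor $2$ in front of $W_m$. One must also invoke the sharp, no-absolute-value form of the comparison theorem to avoid an extraneous constant. The remaining ingredients — the pointwise inequality, the centering, the one-sided symmetrization, and McDiarmid's inequality — are routine once the surrogate $\psi$ is fixed.
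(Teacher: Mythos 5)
The paper states this proposition without proof, quoting it from the cited references of Tropp and Mendelson, and your argument is a correct reconstruction of the standard small-ball proof given there: the pointwise bound via the level-$\xi$ count, the $\tfrac1\xi$-Lipschitz surrogate sandwiched between the two indicators, the bounded-differences step with $\sum_i c_i^2=1$ yielding exactly $e^{-2t^2}$, and one-sided symmetrization plus the vanishing-at-zero contraction principle giving $\xi\,\mathbb{E}F\le 2W_m(E,\mathbf{a})$. All constants land correctly, so nothing further is needed.
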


The standard application of the small ball method in compressed sensing goes as follows: Choose $E=\{\mathbf{v} \in S^{n-1}: \exists S\subset [n], |S|\le s \ \text{s.t} \  \|v_S\|_{2} > \frac{\rho}{\sqrt{s}}\|v_{S^c}\|_1\}$ with $\rho <1$ being a constant. It follows that, for $m = \Omega (W_m(E,\mathbf{a})^2/Q_{2\xi}(E,\mathbf{a})^2)$, with high probability, no vector $\mathbf{v}\in E$ lies in the kernel of $A$, therefore $A$ satisfy the $\ell_2$ robust nullspace property. It remains to choose a suitable distribution for the random vector $\mathbf{a}$ such that $Q_{\xi}(E,\mathbf{a}) >0$ and $W_m(E,\mathbf{a})/Q_{2\xi}(E,\mathbf{a}) \sim s\log\frac{en}{s}$. It was shown by Lecue and Mendelson \cite{lecue2017sparse} that isotropic random vectors, with $\log(n)$ bounded moments satisfy these requirements.

Kueng and Jung applied the small method with $\mathbf{a} \in \mathbb{R}^n$ being a random vector with i.i.d Bernoulli $p$ entries. They obtained that, with high probability, for $m=\Theta(\frac{s(2p-1)}{p^3(1-p)^3\log \frac{p}{1-p}}\log\frac{en}{s})$, the matrix $A \in \{0,1\}^{m\times n}$ formed by i.i.d entries Bernoulli $p$ satisfy the $\ell_2$ robust null space property of order $s$. The proof mimes the techniques used in \cite{lecue2017sparse,dirksen2016gap}. They also showed that $A$ belongs to the set $\mathcal{M}^{+}$ with high probability.

Combining these two facts together, Kueng and Jung were able to show that, if we assume that the vector to be recovered is nonnegative and $s$-sparse, then it is possible to prove recovery guarantees for a simple nonnegative least squares:

\begin{equation}
\hat{x}= \arg \min_{z\ge 0}  \|Az - y\|_2.
\end{equation}

The constraint $z\ge 0$ means that all entries of $z$ are nonnegative. Remarkably, the algorithm does not use any knowledge of the noise level $\eta$ and it gives rise to significant computational savings because the least squares algorithm is considerably cheaper than the QCBP \eqref{QCBP} in terms of computational costs.

For this work, the main question raised by Kueng and Jung is the following: What is the correct behaviour of the number of measurements in the practically relevant regime when $p \rightarrow 0$? \cite[Remark 10]{kueng2017robust}.

We first observe that an important downside of the small ball method is the necessity to have $Q_{2\xi}(E,\mathbf{a})>0$, i.e, the random vector $\mathbf{a}$ must satisfy a small ball condition, otherwise the conclusion of the theorem is empty, the right hand side becomes negative. The small ball condition has been analysed in a recent line of work dedicated to remove such condition in many different problems in mathematical statistics and mathematical signal processing \cite{saumard2018optimality,mourtada2021distribution,vavskevivcius2020suboptimality,krahmer2017phase}. Indeed, the estimates provided by the small ball method quickly deteriorates when $Q_{2\xi}(E,\mathbf{a})$ is not an absolute constant. Arguably, random matrices formed by Bernoulli $p$ entries with $p=o(1)$ are the most important example not captured by the small ball method. It is easy to see that $Q_{2\xi}(E,\mathbf{a})\le p$ when the vector $\mathbf{a}$ is formed by i.i.d for Bernoulli $p$ entries. It was pointed out by Kueng and Jung \cite[Section B]{kueng2017robust} that the small ball method cannot give sharper dependency on $p$ up to an absolute constant.

Recently, Jeong, Li, Plan and Yilmaz \cite{jeong2020sub} improved the dependency on $p$ in the result of Kueng and Jung as an application of their refined concentration inequalities. They obtained standard concentration inequalities, for example Bernstein inequality, with a better dependency on the so called Orlicz $\psi_2$ norm of a random variable. We include here the definition for the sake of completeness,
\begin{definition}
A random variable $X$ is subgaussian if the following norm is finite
\begin{equation*}
    \|X\|_{\psi_2}:= \inf\{t>0: \mathbb{E}[e^{-X^2/t^2}]\le 2\}.
\end{equation*}
\end{definition}
It can be shown that the quantity above is indeed a norm \cite{vershynin2018high}. What is important here is the fact that, for a Bernoulli $p$ random variable, the $\psi_2$ norm depends on $p$, so it is important that the concentration inequality to be used scales correctly with the $\psi_2$ norm. If the parameter $p$ is an absolute constant, then the deterioration of such norm is not relevant.

Formally, they proved that, with high probability, $m = \Theta(\frac{s}{p(1-p)}\log(\frac{en}{s}))$ measurements are enough to establish the $\ell_2$ robust null space property of order $s$. The proof is based on a sophisticated RIP-type analysis. They normalized a Bernoulli $p$ matrix $A$ to become isotropic, $\Tilde{A}: = \frac{1}{\sqrt{p(1-p)}}(A- \mathbb{E}A)$ and considered a projection $P\in \mathbb{R}^{m\times m}$ onto the vector space orthogonal to the all ones vector. Then they proved, via a refined matrix deviation inequality, that $P\Tilde{A} = PA$ satisfies the RIP property and consequently the $\ell_2$ robust NSP property. Finally, they translated the latter property property to $A$ by observing that $Ker(A) \subset Ker(PA)$. We remark that the authors in \cite{jeong2020sub} also proved that $mp>\frac{1}{2}$ is necessary and they explicitly wrote that the dependence on $p$ should be sharp up to absolute constants. It turns out this is far from true (see the discussion after Theorem \ref{thm:Main}). In this paper, we address the following question:
\\

(Q): In the regime $m\ll n$, let $A \in \{0,1\}^{m\times n}$ be a random matrix formed by i.i.d entries Bernoulli $p$. What is the smallest order of $p$ such that the matrix $A$ satisfies a robust nullspace property of order $s$ with $m=C s\log \frac{en}{s}$ rows, where $C>0$ is an absolute constant?

\subsection{Main Results}
Our main result provides a sharp answer for the question raised (Q) for a wide range of sparsity levels $s$. The main theorem is the following:
\begin{theorem}{(Main Theorem)}
\label{thm:Main}
Let $A\in \{0,1\}^{m\times n}$ be a random matrix formed by i.i.d entries following a Bernoulli $p$ distribution. Then there exists absolute constants $c_1,c_2,c_3,c_4,c_5,c_6>0$ with $c_1,c_4< 1$ such that for $p\le \frac{c_1}{s}$, with probability at least $\frac{c_2}{n^{c_3-1}}$, $A$ satisfies the $\ell_1$ robust nullspace property of order $s$ with parameters $\rho=c_4$ and $\tau =\frac{1}{c_5pm}$, with respect to the $\ell_1$ norm provided that $m\ge c_6c_3\frac{\log n}{p}$.

In particular, for $p=\frac{c_1}{s}$, $m = \frac{1}{c_1}c_3c_6s\log n$ measurements with $c_3>1$ are enough to guarantee the robust nullspace property with high probability.
\end{theorem}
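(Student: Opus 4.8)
The plan is to realise $A$ as the biadjacency matrix of a random bipartite graph and to exploit the combinatorial expansion of that graph, in the spirit of \cite{berinde2008combining}. The guiding principle is that a left-regular bipartite graph whose neighbourhoods expand almost losslessly has a biadjacency matrix satisfying the $\ell_1$ robust nullspace property, with $\tau$ of the order of the reciprocal of the left-degree; see \cite[Chapter~13]{foucart2013invitation}. Here the columns of $A$ have random degree concentrated around $mp$, which plays the role of the left-degree, so the target value $\tau = 1/(c_5 p m)$ is precisely the reciprocal-degree scaling predicted by expander theory. Accordingly the argument splits into a deterministic reduction and a probabilistic expansion estimate, the latter being the main work.

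For the reduction, the first step is to isolate the implication ``expansion $\Rightarrow$ robust NSP'' in a form tolerating irregular degrees. For a set of columns $S \subset [n]$ write $d_j$ for the $j$-th column degree, $E(S) = \sum_{j \in S} d_j$ for the number of ones in these columns, and $F(S)$ for the number of \emph{collision ones} --- ones of $A$ lying in a row that meets $S$ in at least two columns. The plan is to show that the two deterministic estimates
\[
\min_{j \in [n]} d_j \ge (1-\theta)\, mp \qquad \text{and} \qquad F(S) \le \theta\, mp\, |S| \ \text{ for all } |S| \le s
\]
force $A$ to satisfy the $\ell_1$ robust NSP of order $s$ with $\rho = c_4 < 1$ and $\tau = 1/(c_5 p m)$, for $\theta$ a sufficiently small absolute constant. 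The underlying mechanism is the unique-neighbour principle of \cite{berinde2008combining}: writing $Z_i = \sum_{j\in S} A_{ij}$, the number of rows in which some $j \in S$ is the \emph{only} neighbour inside $S$ equals $\#\{i : Z_i = 1\} = E(S) - F(S) \ge (1-2\theta)\,mp\,|S|$; feeding this into the layer-cake summation over the level sets of $v_S$ (as in the expander $\Rightarrow \ell_1$-RIP proof) produces a restricted lower bound of the form $\|Av\|_1 \ge (1-2\theta)\,mp\,\|v_S\|_1 - C\theta\, mp\,\|v_{S^c}\|_1$, which rearranges into the robust NSP with $c_5 = 1-2\theta$ and $c_4 = C\theta/(1-2\theta)$.

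The heart of the proof is the probabilistic expansion estimate, which I would establish by a first-moment computation followed by a union bound. Fixing $|S| = k \le s$ and noting $Z_i = \sum_{j \in S} A_{ij} \sim \mathrm{Bin}(k,p)$ is the number of ones of row $i$ inside $S$, the rows are independent, $E(S) = \sum_i Z_i$, and $F(S) \le 2\sum_i \binom{Z_i}{2}$. Since $\mathbb{E}\,E(S) = kmp$ while $\mathbb{E}\sum_i \binom{Z_i}{2} = m\binom{k}{2}p^2 \le \tfrac12 (kp)(kmp)$, the hypothesis $p \le c_1/s$ makes the expected collision fraction at most $kp \le c_1$; choosing $c_1$ below the absolute constant $\theta$ forces the expansion to hold on average. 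To make it uniform I would apply a Chernoff bound to the lower tail of $\sum_i Z_i$ (for the minimum-degree estimate, union bounded over the $n$ columns) and a moment-generating-function estimate to the upper tail of $\sum_i \binom{Z_i}{2}$, obtaining a per-set failure probability of order $\exp(-c\,\theta\, mp\, k)$, and finally union bound over the at most $\binom{n}{k} \le (en/k)^k$ sets of size $k$ and over $1 \le k \le s$.

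The main obstacle is making this union bound close, and this is exactly where the hypothesis $m \ge c_6 c_3 (\log n)/p$, equivalently $mp \ge c_6 c_3 \log n$, enters. Summing $(en/k)^k \exp(-c\,\theta\, mp\, k)$ in $k$ requires the degree $mp$ to dominate the per-column entropy $\log(en/k) \le \log(en)$; taking the absolute constant $c_6$ large enough makes every term at most $n^{-(c_3-1)}$ and the (essentially geometric) series at most $c_2 n^{-(c_3-1)}$, the claimed probability. Two points demand care: the upper-tail concentration of $\sum_i \binom{Z_i}{2}$, whose summands are unbounded ($\binom{Z_i}{2}$ can reach $\binom{k}{2}$) yet vanish with overwhelming probability because $\mathbb{E}\,Z_i = kp \le c_1 < 1$, so the exponential moment must be controlled using the smallness of $kp$ rather than a crude boundedness bound; and the verification that the exponent genuinely beats the combinatorial factor uniformly in $k$, including the boundary regime $k \asymp s$. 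Specialising to $p = c_1/s$ and $m = \tfrac{1}{c_1} c_3 c_6 s \log n$ gives $mp = c_3 c_6 \log n = \Theta(\log n)$, and the above yields the stated high-probability conclusion.
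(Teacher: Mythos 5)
Your overall architecture coincides with the paper's: a deterministic ``quasi-regular expansion $\Rightarrow$ $\ell_1$ robust NSP'' reduction with $\tau \sim 1/(mp)$ (the paper's Theorem \ref{thm:important1}, proved via collision-counting lemmas and a block decomposition of $S^c$ rather than an explicit unique-neighbour/RIP-1 statement), plus a probabilistic verification of expansion by Chernoff bounds and a union bound over supports, with $mp \gtrsim \log n$ needed to beat the entropy $\binom{n}{k}$ (the paper's Theorem \ref{thm:important2}). The deterministic half of your plan is fine in spirit; the only caveat is that controlling the cross terms between $v_{S}$ and the blocks of $v_{S^c}$ requires the collision bound on sets of size $2s$, not $s$, which is why the paper works with $(2s,d,\delta,\theta)$-expanders and costs only a factor $2$ in $c_1$.

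The genuine gap is in the probabilistic step, specifically in the surrogate $F(S) \le 2\sum_i \binom{Z_i}{2}$. The collision count $F(S) = \sum_{i:Z_i\ge 2} Z_i$ is well behaved (each row contributes at most $Z_i \le k$), but $\binom{Z_i}{2}$ contributes quadratically, and its upper tail is dominated by the event that a \emph{single} row has many ones inside $S$. Concretely, for $|S|=k$ the event $\{\exists i:\ Z_i \ge j\}$ with $j=\lceil\sqrt{\theta mpk}\rceil$ already forces $2\sum_i\binom{Z_i}{2} > \theta mpk$, and its probability is at least of order $m\binom{k}{j}p^j(1-p)^{k-j} \ge e^{-Cj\log(j/(kp))} = e^{-C'\sqrt{\theta mpk}\,\log(mpk)}$. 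This is far larger than the rate $e^{-c\theta mpk}$ you claim, and for $k\asymp s$ with $mp = \Theta(\log n)$ it equals $e^{-\Theta(\sqrt{s\log n}\,\log(s\log n))}$, which does not beat $\binom{n}{s}\approx e^{s\log(en/s)}$; the union bound over supports of size close to $s$ therefore cannot close via this route, and no amount of care with the moment generating function will repair it, because the per-set tail estimate you need for $\sum_i\binom{Z_i}{2}$ is simply false at that rate. The fix is to avoid the quadratic surrogate altogether: lower-bound the number of distinct nonzero rows $|R(S)| = \#\{i: Z_i\ge 1\}$, which is a sum of $m$ independent \emph{bounded} indicators, each equal to $1$ with probability $q=1-(1-p)^k \ge pk - p^2k^2/2 \ge (1-\theta/2)pk$ when $pk \le c_1$, so that Chernoff's lower tail gives the full rate $e^{-cmq} \le e^{-c'mpk}$ per set. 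Combined with the per-column degree upper bound $E(S)\le(1+\delta)mpk$, this yields the unique-neighbour count via $\#\{i:Z_i=1\} \ge 2|R(S)| - E(S)$, which is exactly the quantity your reduction needs and is precisely how the paper's Condition~2 is verified.
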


In Section \ref{sec:lowerbounds}, we show that $m=\Omega(\frac{\log n}{p})$ is necessary, therefore the main Theorem \ref{thm:Main} cannot be improved, it can only be extended for larger values of $p$. We also remark that the regime $s\ll n$ was considered due to practical applications such as wireless network \cite{kueng2017robust}. Under a slightly stronger assumption, $s=O(n^{\beta})$ for some constant $0<\beta<1$, the functions $\log n$ and $\log(\frac{en}{s})$ are of the same order and then our result completely settles the question (Q). Moreover, due to concentration, each column of $A$ has at most $O(\log n)$ ones. This is sharp for $s=O(n^{\beta})$ and almost sharp for $n^{\beta}\ll s \ll n$ (up to an iterated log factor) in the following sense: If an $m\times n$ binary matrix with $m =\Theta(s\log(\frac{en}{s}))$ satisfies the nullspace property then each column has at least $\log\frac{en}{s}$ ones \cite{nachin2021lower}. Moreover, notice that all previous results scale with $m=\Omega(s^2\log\frac{en}{s})$ in the same sparse regime, the random constructions becomes worse than the standard deterministic constructions \cite{foucart2013invitation}. In particular, our main theorem disproves the claim in \cite{jeong2020sub} that $m=\Theta (\frac{s}{p}\log(\frac{en}{s}) )$ is optimal. Indeed, for simplicity, in the noiseless case, if the result mentioned were optimal then for $p=O(\frac{1}{s})$ and $s=O(n^{1/2})$ (say) we would need $m=\Omega (s^2\log \frac{en}{s})$ measurements to perform exact recovery, however Theorem \ref{thm:Main} guarantees that $m=O(s\log\frac{en}{s})$ measurements are enough to perform exact recovery in the same regime of $p$. The result also reveals a new phase transition not captured by the previous results. Finally, we also present a conjecture in Section \ref{sec:lowerbounds} about the precise answer for the question (Q) in full generality.

As an application, we derive noise blind guarantees for nonnegative compressed sensing with sparse Bernoulli matrices. In a few words, a simple $\ell_1$ minimization algorithm suffices to perform signal recovery. No knowledge of the noise level is required. After publishing this manuscript, the author was notified that a similar result was obtained with a different measurement matrix, see \cite{petersen} for more details. We refer the reader to Section \ref{sec:applications} for the formal statement and discussion.

\subsection{Organisation of the paper}
The rest of this paper is organised as follows: In Section \ref{sec:preliminaries+background} we provide some background results. In Section \ref{sec:quasi-regular_expanders} we provide the main ideas of this work and the proof of Theorem \ref{thm:Main}. In Section \ref{sec:lowerbounds} we establish lower bounds and the precise conjecture. Section \ref{sec:applications} is dedicated to applications of our main results to noise blind compressed sensing. The appendix is dedicated to technical proofs.

\section{Preliminaries \& Background}
\label{sec:preliminaries+background}
We start by introducing some notation. The set $\{1,\ldots,n\}$ is denoted by $[n]$. For $1\le q\le \infty$, we write $\|.\|_q$ for the standard $\ell_q$ norm for vectors and $B_q^{n},S_q^{n-1}$ for the $n$-dimension unit ball and $n-1$-dimensional unit sphere with respect to the $\ell_q$ norm, respectively. For the standard Euclidean sphere $S_2^{n-1}$, we just write $S^{n-1}$. For functions $f(s,m,n)$ and $g(s,m,n)$ we write $f(s,m,n)\lesssim g(s,m,n)$ if there exists an absolute constant $C>0$ such that $f(s,m,n) \le C g(s,m,n)$, the notation $f(s,m,n)\gtrsim g(s,m,n)$ is defined analogously and we write $f(s,m,n) \sim g(s,m,n)$ if $f(s,m,n) \lesssim g(s,m,n)$ and $f(s,m,n)\gtrsim g(s,m,n)$. For a set $S \subset [n]$ we denote its complement over $[n]$ by $S^{c}$ and its cardinality by $|S|$. For a vector $x\in \mathbb{R}^n$, $x_S$ is the vector such that $(x_{S})_j = x_j$ for all $j\in S$ and $(x_{S})_j=0$ otherwise. The notation $x\ge 0$ means that the coordinates is nonnegative and $\mathbf{1}_m$ denotes all ones vector in $\mathbb{R}^m$. The best $s$-term approximation of the vector $x$ with respect to the $\ell_1$ norm is defined as $\sigma_s(x)_1:=\inf\{\|z-x\|_1: z \ \text{is $s$-sparse}\}$. For matrices $W \in \mathbb{R}^{n\times n}$, $\|W\|$ denotes the standard operator norm and $\diag(w)$ denotes the diagonal matrix formed by the vector $w \in \mathbb{R}^n$. A Bernoulli $p$ matrix $A$ is a random matrix formed by i.i.d entries $0/1$ Bernoulli $p$ random variables, i.e, random variables that are one with probability $p$ and zero otherwise. We now provide some background definitions and results.

\begin{remark}
We are interested in the regime when $p$ vanishes, therefore, for the rest of this paper, we explicitly assume that $p\le\frac{1}{2}$. 
\end{remark}
\begin{definition}{\cite{foucart2013invitation}}
Given $q\ge 1$, the matrix $A \in \mathbb{R}^{m\times n}$ satisfies the $\ell_q$ robust nullspace property of order $s$ with respect to a norm $\|.\|$ with constants $0<\rho<1$ and $\tau >0$ if, for any set $S \subset [n]$ with cardinality $|S|\le s$ and vector $v \in S^{n-1}$,
\begin{equation*}
  \|v_{S}\|_q \le \frac{\rho}{s^{1-1/q}} \|v_{S^c}\|_1 + \tau \|Av\|
\end{equation*}
\end{definition}
The next theorem is a classical result that relates nullspace property with recovery guarantees.
\begin{theorem}{\cite{foucart2013invitation}}
\label{Thm:4.25-Fourcart/Rauhut}
Suppose that the matrix $A \in \mathbb{R}^{m\times n}$ satisfies the $\ell_q$ robust null space property of order $s$ with constants $0<\rho<1$ and $\tau > 0$ with respect to a norm $\|.\|$. Then for any vectors $x,z \in \mathbb{R}^n$ and $1\le p\le q$,
\begin{equation*}
    \|z-x\|_p \le \frac{C}{s^{1-1/p}}(\|z\|_1 - \|x\|_1 + 2\sigma_s(x)_1) + D\tau s^{1/p-1/q} \|A(z-x)\|.
\end{equation*}
Here $C,D>0$ are constants depending only on $\rho$. In particular, if $\hat{x}$ is the solution of the optimization program \eqref{QCBP} then
\begin{equation*}
    \|\hat{x}-x\|_p \le \frac{C}{s^{1-1/p}}2\sigma_s(x)_1 + 2D\tau s^{1/p-1/q}\eta.
\end{equation*}
\end{theorem}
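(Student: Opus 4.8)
The plan is to reduce the theorem to two standard ingredients---the feasibility/minimality structure of the $\ell_1$ objective and a Stechkin-type comparison between the $\ell_1$ and $\ell_p$ norms---and to glue them together through the robust nullspace property, which by homogeneity I may assume holds for \emph{every} $v \in \mathbb{R}^n$ (both sides of the defining inequality scale linearly in $v$, so the normalization $v \in S^{n-1}$ is harmless). Throughout I write $v := z - x$ and set $\eta_0 := \|z\|_1 - \|x\|_1 + 2\sigma_s(x)_1$.

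First I would establish the case $p = 1$. Let $S$ be the index set of the $s$ largest-magnitude entries of $x$, so that $\|x_{S^c}\|_1 = \sigma_s(x)_1$. Expanding $\|z\|_1 = \|x + v\|_1$ over $S$ and $S^c$ and applying the reverse triangle inequality on each block yields the ``bucket'' inequality
\[
\|v_{S^c}\|_1 \le \eta_0 + \|v_S\|_1 .
\]
Now I invoke the robust NSP for this $S$ together with the elementary bound $\|v_S\|_1 \le s^{1-1/q}\|v_S\|_q$ (Hölder on a set of size $\le s$), which gives $\|v_S\|_1 \le \rho\|v_{S^c}\|_1 + \tau s^{1-1/q}\|Av\|$. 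Substituting into the bucket inequality and using $\rho < 1$ to move $\rho\|v_{S^c}\|_1$ to the left produces $\|v_{S^c}\|_1 \lesssim_{\rho} \eta_0 + \tau s^{1-1/q}\|Av\|$, and adding back $\|v_S\|_1$ yields the full estimate $\|v\|_1 \le \frac{1+\rho}{1-\rho}\eta_0 + \frac{2\tau s^{1-1/q}}{1-\rho}\|Av\|$, i.e.\ the claim for $p = 1$.

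For general $1 \le p \le q$ I would interpolate. The key observation is that the correct index set is now $S^{\ast}$, the set of the $s$ largest entries of $v$ \emph{itself} (not of $x$), so that the tail $v_{(S^{\ast})^c}$ consists of the smallest coordinates and the Stechkin inequality $\|v_{(S^{\ast})^c}\|_p \le s^{1/p-1}\|v\|_1$ applies. On the head I combine the sparse-vector comparison $\|v_{S^{\ast}}\|_p \le s^{1/p-1/q}\|v_{S^{\ast}}\|_q$ with the robust NSP for $S^{\ast}$ and the crude bound $\|v_{(S^{\ast})^c}\|_1 \le \|v\|_1$. Splitting $\|v\|_p \le \|v_{S^{\ast}}\|_p + \|v_{(S^{\ast})^c}\|_p$ and collecting exponents (noting $s^{1-1/q}/s^{1-1/p} = s^{1/p-1/q}$) gives $\|v\|_p \le \frac{1+\rho}{s^{1-1/p}}\|v\|_1 + \tau s^{1/p-1/q}\|Av\|$; feeding in the $\ell_1$ estimate of the previous step produces the general inequality with constants $C,D$ depending only on $\rho$.

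Finally, the ``in particular'' assertion is immediate: if $\hat{x}$ solves \eqref{QCBP}, then $x$ is feasible since $\|Ax - y\| = \|e\| \le \eta$, minimality forces $\|\hat{x}\|_1 \le \|x\|_1$ so that $\eta_0 \le 2\sigma_s(x)_1$, and the triangle inequality gives $\|A(\hat{x}-x)\| \le \|A\hat{x} - y\| + \|Ax - y\| \le 2\eta$; substituting both into the general bound finishes the proof. I expect the main obstacle to be organizational rather than conceptual: one must be disciplined about using two different $s$-sets (largest entries of $x$ for the feasibility step, largest entries of $v$ for the Stechkin step) and about tracking the powers of $s$ through the interpolation, where it is easy to misplace an $s^{1/p-1/q}$ factor. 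The only genuinely external input is the Stechkin $\ell_1 \to \ell_p$ comparison, which is entirely standard.
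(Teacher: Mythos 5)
Your argument is correct and coincides with the standard textbook proof of this result (Theorems 4.20 and 4.25 in Foucart--Rauhut), which the paper itself cites without reproducing: the reduction to the $\ell_1$ case via the bucket inequality on the support of the $s$ largest entries of $x$, followed by the Stechkin/H\"older interpolation on the $s$ largest entries of $v=z-x$, is exactly the intended route, and your homogeneity remark correctly disposes of the normalization $v\in S^{n-1}$ in the paper's definition of the nullspace property. The constants you obtain ($C=(1+\rho)^2/(1-\rho)$, $D=(3+\rho)/(1-\rho)$) depend only on $\rho$ as required, and the feasibility argument for the ``in particular'' clause is complete.
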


In a nutshell, our approach to tackle the question (Q) is based on spectral graph theory. The key idea is to view the Bernoulli $p$ matrix, for a certain regime of $p$, as the adjacency matrix of a bipartite graph that behaves similarly to a lossless expander. In order to provide a precise description of this idea, we present some background results in this direction. Consider a bipartite graph $G=(L,R,E)$, where the set of left nodes $L$ should be identified with $[n]$, the set of right nodes $R$ should be identified with $[m]$ and each edge $e \in \overline{ji} \in E$ connects a vertex $j\in L$ to other vertex $i\in R$. A graph $G=G(L,R,E)$ is said to be $d$-left regular if all nodes in $L$ have degree $d$.

\begin{definition}{(Lossless expanders)}
A $d$-left regular bipartite graph is an $(s,d,\theta)$-lossless expander if, for all subsets $J\subset L$ with $|J|\le s$, the set
\begin{equation*}
    R(J):= \{i \in R: \text{there is $j\in J$ with $\overline{ji}\in E$}\},
\end{equation*}
satisfies the following expansion property
\begin{equation*}
    |R(J)| \ge (1-\theta)d |J|.
\end{equation*}
The smallest $\theta>0$ that satisfies the above property is denoted by $\theta_s$.
\end{definition}
A natural concept associated with a graph is its adjacency matrix, the next definition and theorem establish the connection between the lossless expanders and compressed sensing.
\begin{definition}{(Adjacency matrix)}

The adjacency matrix of a bipartite graph $G=G(L,R,E)$ with $|L|=n$ and $|R|=m$ is the $m\times n$ matrix defined as follows
\begin{equation}
\label{def:adjancency}
    M_{i,j}:= \begin{cases}
    1, \text{if $\overline{ji} \in E$}\\
    0, \text{otherwise.}
    \end{cases}
\end{equation}
\end{definition}
An important observation is that the adjacency matrix of lossless expander is a binary matrix with $d$ ones per column in which every submatrix $m\times k$ with $k\le s$ has at least $(1-\theta)dk$ nonzero rows. The next result establishes that lossless expanders provide remarkable guarantees for sparse recovery.

\begin{theorem}{\cite{foucart2013invitation}}
\label{thm:lossless-l1robust}
The adjacency matrix $M \in \{0,1\}^{m\times n}$ of a $(2s,d,\theta)$-lossless expander satisfies the $\ell_1$ robust nullspace property of order $s$ with respect to the $\ell_1$ norm provided that $\theta_{2s} <\frac{1}{6}$. Precisely, for all $v\in \mathbb{R}^n$ and subset $S \subset[n]$ with $|S|\le s$,
\begin{equation*}
    \|v_{S}\|_1 \le \frac{\theta_{2s}}{1-4\theta_{2s}}\|v_{S^c}\|_1 + \frac{1}{(1-4\theta_{2s})d}\|Av\|_1.
\end{equation*}
Moreover, if $G$ is a random graph sampled uniformly at random among all left $d$-regular bipartite graphs with $n$ left nodes and $m$ right vertices, then with probability $1-\varepsilon$, $G$ is a $(s,d,\theta)$-lossless expander provided that
\begin{equation*}
    d = \ceil*{\frac{1}{\theta}\log(\frac{en}{\varepsilon s})} \quad \text{and} \quad m\ge c_{\theta}s\log(\frac{en}{\varepsilon s}).
\end{equation*}
Here $c_{\theta}>0$ is constant that depends only on $\theta$.
\end{theorem}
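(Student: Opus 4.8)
The plan is to separate the statement into its two logically independent halves: the deterministic implication that the adjacency matrix $M$ of a $(2s,d,\theta)$-lossless expander satisfies the $\ell_1$ robust nullspace property, and the probabilistic claim that a uniformly random left $d$-regular bipartite graph is such an expander with probability $1-\varepsilon$. I would prove them in that order, since the random construction is only needed to exhibit objects satisfying the deterministic hypothesis.

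For the deterministic half the engine is an $\ell_1$ restricted-isometry estimate: for every $x$ supported on a set of size at most $2s$ one has $(1-2\theta)d\|x\|_1 \le \|Mx\|_1 \le d\|x\|_1$, the upper bound being immediate from $d$-regularity of the columns of $M$. To obtain the lower bound I would sort the support of $x$ by decreasing magnitude and call an edge $\overline{ji}$ a \emph{collision} if the right node $i$ already receives an edge from a coordinate of strictly larger magnitude; writing each entry of $Mx$ as its dominant contribution minus the colliding ones gives $\|Mx\|_1 \ge d\|x\|_1 - 2\sum_{\overline{ji}\ \mathrm{collision}} |x_j|$, where each collision is weighted by its \emph{smaller} coordinate. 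Applying the expansion property $|R(J)|\ge (1-\theta)d|J|$ to each prefix $J$ of the sorted support bounds the number of collisions inside the top $t$ coordinates by $\theta d t$, and an Abel summation over $t$ converts this into $\sum_{\mathrm{collision}}|x_j|\le \theta d\|x\|_1$, yielding the factor $1-2\theta$.

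With this in hand I would derive the nullspace inequality. By monotonicity of both sides it suffices to take $S$ to be the indices of the $s$ largest entries of $v$, and I partition $S^c$ into consecutive blocks $S_1,S_2,\ldots$ of size $s$ arranged in decreasing order of magnitude. The crude route, applying the restricted-isometry bounds to $v_S$ and to each $v_{S_t}$ through $Mv_S = Mv - \sum_{t}Mv_{S_t}$, only yields the coefficient $\tfrac{1}{1-2\theta}$ in front of $\|v_{S^c}\|_1$, which is useless because it exceeds $1$. The whole difficulty is to replace this by a genuine $\theta$-fraction: restricting attention to the unique neighbors of $S$ and accounting, via the expansion of $S$ together with an initial segment of $S^c$, for how little of the complement's mass can leak onto those rows, one extracts the coefficient $\tfrac{\theta_{2s}}{1-4\theta_{2s}}$ and the companion $\tfrac{1}{(1-4\theta_{2s})d}$ on $\|Mv\|_1$. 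I expect this \textbf{sharp leakage accounting} to be the main obstacle, and the hypothesis $\theta_{2s}<\tfrac16$ is precisely what keeps the denominator $1-4\theta_{2s}$ positive and the resulting $\rho$ below $1$.

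For the probabilistic half I would model the random left $d$-regular graph by letting each left node choose its $d$ right neighbors independently and bound the failure probability by a union bound over candidate violating sets. Fixing $J$ with $|J|=k$, the event $|R(J)|<(1-\theta)dk$ forces all $dk$ edges out of $J$ to land inside some right set $T$ of size $(1-\theta)dk$, an event of probability at most $(|T|/m)^{dk}$; union-bounding over the $\binom{m}{(1-\theta)dk}$ choices of $T$, the $\binom{n}{k}$ choices of $J$, and then over $1\le k\le s$, and simplifying with $\binom{n}{k}\le (en/k)^k$ and $\binom{m}{\ell}\le (em/\ell)^\ell$, produces a sum whose $k$-th summand is an exponential in $k$. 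The choices $d=\lceil \tfrac1\theta\log(\tfrac{en}{\varepsilon s})\rceil$ and $m\ge c_\theta s\log(\tfrac{en}{\varepsilon s})$ are exactly calibrated so that this exponent is negative enough, uniformly in $k$, for the total to be at most $\varepsilon$; the only real work here is checking that the entropy factor $\binom{n}{k}$ is dominated by the collision-probability decay for every $k$.
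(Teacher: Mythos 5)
Your two-part division is the right one, but note that the paper never actually proves this theorem: it is imported from Foucart--Rauhut, and the probabilistic half is not argued anywhere in the paper. What the paper does prove (Theorem \ref{thm:important1}, in the Appendix) is a quasi-regular generalization of the deterministic half, recovering it at $\delta=0$. Measured against that proof, your plan arrives at the same place by a slightly longer road: you open with the $\ell_1$-RIP estimate $(1-2\theta)d\|x\|_1\le\|Mx\|_1\le d\|x\|_1$ --- which the paper explicitly chooses to bypass --- correctly observe that it cannot yield $\rho<1$, and then fall back on the unique-neighbor decomposition plus ``leakage accounting,'' which is precisely the paper's argument. But that accounting is the entire content of the deterministic half, and you only name it. The missing lemma is an edge count (Lemma \ref{lemma_for_E(J;K)} with $\delta=0$): for disjoint $J,K$ with $|J|+|K|\le 2s$, the edges from $K$ landing in $R(J)$ number $|E(J;K)|=|E(J\cup K)|-|E_1|-|E_2|$, where $|E(J\cup K)|=d(|J|+|K|)$, $|E_1|=d|J|$ counts edges out of $J$, and $|E_2|\ge |R(J\cup K)|-|R(J)|\ge (1-\theta_{2s})d(|J|+|K|)-d|J|$ counts edges out of $K$ avoiding $R(J)$; hence $|E(J;K)|\le \theta_{2s}d(|J|+|K|)$. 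Applied with $J=S_0$ and $K=S_k$, $k\ge 1$, together with $s\|v_{S_k}\|_\infty\le\|v_{S_{k-1}}\|_1$, this converts the complement's contribution on the rows $R(S_0)$ into a $\theta_{2s}$-fraction of $\|v\|_1$; your Abel-summation collision bound handles the contribution of $S_0$ itself. One warning on constants: carried out honestly the collision term enters with a factor $2$, and one lands on $\rho=2\theta_{2s}/(1-4\theta_{2s})$ --- which is exactly what the hypothesis $\theta_{2s}<\frac16$ keeps below $1$; the coefficient $\theta_{2s}/(1-4\theta_{2s})$ printed in the statement appears to have dropped that factor, so do not expect your computation to reproduce it literally.

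The probabilistic half as you set it up is sound and self-contained: choosing each left node's $d$ neighbors independently and uniformly is the uniform left-$d$-regular model, and the union bound $\sum_{k\le s}\binom{n}{k}\binom{m}{\ell_k}\left(\ell_k/m\right)^{dk}$ with $\ell_k=\lfloor(1-\theta)dk\rfloor$ does close under the stated choices of $d$ and $m$, at the price of a constant $c_\theta$ of order $e^{1/\theta}/\theta$ rather than the sharper one obtained by bounding the number of sequential collisions by $\theta dk$ directly; since $c_\theta$ is allowed to depend on $\theta$, this is harmless. As the paper offers no proof of this half to compare against, the only check is the calibration you already flagged, and it goes through.
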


\section{The quasi-regular lossless expanders}
\label{sec:quasi-regular_expanders}
A closer look at the idea to view a matrix $A\in \{0,1\}^{m\times n}$ formed by i.i.d entries Bernoulli $p$ as an adjacency matrix of a lossless expander, in the sense of \eqref{def:adjancency}, reveals that it is not possible. On average, the matrix $A$ has $mp$ ones per column, but we need to take in account the fluctuations of the degree. On the other hand, the left degree is a binomial random variable and concentrates well around the mean. To overcome this issue, we extend the notion of lossless expanders to quasi-regular lossless expander.

\begin{definition}{(Quasi-regular lossless expanders)}
A bipartite graph $G=G(L,R,E)$ is said to be a $(\delta,d)$-left quasi regular if the left degree of all vertices in $L$ lies on the interval $[(1-\delta)d,(1+\delta)d]$. Moreover, a $(\delta,d)$-left quasi regular is a $(s,d,\delta,\theta)$- quasi regular lossless expander if it satisfies the following expansion property
\begin{equation*}
    |R(J)| \ge (1-\theta)d |J|,
\end{equation*}
for all sets $J$ of left vertices, with cardinality $|J|\le s$. The smallest $\theta$ such that the above property holds is denoted by $\theta_{s}$.
\end{definition}

It should be clear that for $\delta=0$, the quasi-regular lossless expander becomes the lossless expander. In this work, it is more appropriate to think that the quasi-regular lossless expanders are represented by matrices with $(1\pm \delta)d$ ones per column and each $m\times k$ submatrix with $k\le s$ has at least $(1-\theta)dk$ nonzero rows.

The next natural step is to obtain a version of Theorem \ref{thm:lossless-l1robust} for Bernoulli $p$ matrices. We split our analysis in two parts: The first one is purely deterministic, it establishes that the adjacency matrix of the quasi-regular lossless expander satisfies the $\ell_1$ robust nullspace property. The second part is about the random construction, it shows that the Bernoulli $p$ matrix $A$ is the adjacency matrix of a quasi-regular lossless expander with high probability. Let us state the formal result related to the first part.

\begin{theorem}
\label{thm:important1}
The adjacency matrix $A \in \{0,1\}^{m\times n}$ of a $(2s,d,\delta,\theta)$- quasi regular lossless expander satisfies the $\ell_1$ null space property of order $s$ with parameters $\rho = \frac{2\theta_{2s}+6\delta}{1-4\theta_{2s}-13\delta} $ and $\tau = \frac{1}{d(1-4\theta_{2s}-13\delta)}$ with respect to the $\ell_1$ norm provided that the constants $\theta_{2s}$ and $\delta$ satisfy $6\theta_{2s}+ 19\delta < 1$.
\end{theorem}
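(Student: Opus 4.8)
The plan is to mirror the regular case (Theorem~\ref{thm:lossless-l1robust}, which is exactly the specialization $\delta=0$) while carrying the multiplicative degree fluctuations $(1\pm\delta)d$ through every edge count and every collision count. First I would reduce to a convenient support: by homogeneity and the definition of the best $s$-term structure it suffices to verify the nullspace inequality when $S$ is the index set of the $s$ largest (in magnitude) coordinates of $v$, since for an arbitrary $S'$ with $|S'|\le s$ one has $\|v_{S'}\|_1\le\|v_S\|_1$ and $\|v_{S^c}\|_1\le\|v_{(S')^c}\|_1$, so the general statement follows. I then sort $|v_{\pi(1)}|\ge|v_{\pi(2)}|\ge\cdots$, write $U_\ell=\{\pi(1),\dots,\pi(\ell)\}$ so that $S=U_s$, and partition $[n]$ into consecutive blocks $S_0=S,S_1,S_2,\dots$ of size $s$ ordered by decreasing magnitude.

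The core estimate is a lower bound on $\|Av\|_1$. I would start from $\|Av\|_1\ge\sum_{i\in R(S)}|(Av)_i|$ and, at each right node $i\in R(S)$, single out its \emph{dominant} $S$-neighbor, the incident left vertex $\pi(\ell^*(i))$ of $S$ of largest magnitude, and apply the reverse triangle inequality $|(Av)_i|\ge|v_{\pi(\ell^*(i))}|-\sum_{j\in S,\,j\sim i,\,j\ne\pi(\ell^*(i))}|v_j|-\sum_{k\in S^c,\,k\sim i}|v_k|$. Summing over $i\in R(S)$ produces a positive ``dominant mass'' term, a within-$S$ overlap term, and an $S^c$ leakage term. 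The first two combine into $\sum_{\ell\le s}(2\nu_\ell-d_\ell)\,|v_{\pi(\ell)}|$, where $\nu_\ell=|R(U_\ell)|-|R(U_{\ell-1})|$ counts the right vertices first reached by $\pi(\ell)$ and $d_\ell\in[(1-\delta)d,(1+\delta)d]$ is its degree. Because the partial sums obey $\sum_{\ell'\le\ell}(2\nu_{\ell'}-d_{\ell'})=2|R(U_\ell)|-\sum_{\ell'\le\ell}d_{\ell'}\ge 2(1-\theta_{2s})d\ell-(1+\delta)d\ell=(1-2\theta_{2s}-\delta)d\ell$, Abel summation against the nonincreasing sequence $|v_{\pi(\ell)}|$ gives a lower bound of $(1-2\theta_{2s}-\delta)\,d\,\|v_S\|_1$ for the within-$S$ part.

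The hard part will be the $S^c$ leakage term $\sum_{k\in S^c}|v_k|\cdot\#\{\text{edges from }k\text{ into }R(S)\}$: bounding the edge count by the full degree gives a coefficient of order $d$ on $\|v_{S^c}\|_1$, hence a useless $\rho\ge 1$, so a naive or black-box $\ell_1$-RIP argument fails here. The resolution I would use is a shelling argument on the blocks. For each block $S_j$, apply the expansion property to the size-$2s$ set $S\cup S_j$: every edge from $S_j$ into $R(S)$ forces a right vertex with two neighbors in $S\cup S_j$, so the number of such edges is at most the collision count $\bigl(\text{edges from }S\cup S_j\bigr)-|R(S\cup S_j)|\le(1+\delta)d\cdot 2s-(1-\theta_{2s})d\cdot 2s=(\theta_{2s}+\delta)\cdot 2ds$. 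Combining this with $\max_{k\in S_j}|v_k|\le s^{-1}\|v_{S_{j-1}}\|_1$ and telescoping over $j\ge 1$ controls the entire leakage term by a fixed multiple of $(\theta_{2s}+\delta)\,d\,(\|v_S\|_1+\|v_{S^c}\|_1)$, which is now of the desired order $\theta_{2s}+\delta$ rather than order one.

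Collecting the three estimates yields an inequality of the form $(1-4\theta_{2s}-c\delta)\,d\,\|v_S\|_1\le\|Av\|_1+(2\theta_{2s}+c'\delta)\,d\,\|v_{S^c}\|_1$; dividing through by the coefficient of $\|v_S\|_1$ reads off $\rho$ and $\tau$, with the careful bookkeeping of the $(1\pm\delta)$ factors in each edge, degree, and collision bound accounting for the explicit constants $13\delta$ and $6\delta$. The coefficient $1-4\theta_{2s}-13\delta$ is positive and $\rho<1$ precisely under the hypothesis $6\theta_{2s}+19\delta<1$, since $\rho<1$ is equivalent to $2\theta_{2s}+6\delta<1-4\theta_{2s}-13\delta$. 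I expect the shelling estimate for the $S^c$ term to be the only place the argument genuinely departs from the $\delta=0$ template; everything else is the same sorting/Abel skeleton with $\delta$-corrections tracked throughout.
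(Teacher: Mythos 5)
Your proposal is correct and follows essentially the same route as the paper's appendix proof: sort the coordinates, isolate the dominant (first) $S$-neighbour at each right vertex, control the within-$S$ collisions by applying the expansion property to the nested sets $U_\ell$ together with Abel summation, and control the leakage from $S^c$ via the excess-edge count on $S\cup S_j$ combined with the shelling bound $s\|v_{S_j}\|_\infty\le\|v_{S_{j-1}}\|_1$. The differences are only organisational: you lower-bound $\|Av\|_1$ directly rather than upper-bounding $(1-\delta)d\|v_{S_0}\|_1$, and your slightly tighter collision counting yields constants of the form $2\theta_{2s}+2\delta$ and $1-4\theta_{2s}-3\delta$, which is a stronger conclusion that in particular implies the stated $\rho$ and $\tau$.
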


In particular, for $\delta = 0$, we recover the first part of the statement in Theorem \ref{thm:lossless-l1robust}. The proof is left to the Appendix. We remark that, although the original proof for the deterministic part of Theorem \ref{thm:lossless-l1robust} uses the intuitive idea of $\ell_1$ restricted isometry property, we opt for the proof strategy similarly to \cite{foucart2013invitation} because it goes directly to the nullspace property without passing to any intermediate restricted isometry property and also takes in the account robustness against noise. 

We now provide the main result of this paper. It confirms the intuition that Bernoulli $p$ matrices are a candidate to extend the notion of lossless expanders for non-regular graphs. We remark that such natural extension was possible due to the combinatorial definition of the lossless expanders. The extension of Ramanujan graphs to non-regular graphs is more delicate \cite{bordenave2018nonbacktracking}.

\begin{theorem}
\label{thm:important2}
Let $A \in \{0,1\}^{m\times n}$ with i.i.d entries following a Bernoulli distribution with parameter $p$. There exist an absolute constant $c>0$ such that, for every $0<\theta < \frac 23$, if $m\ge \frac{C}{\delta^2}\frac{\log(n)}{p}$ and $ps\le \frac{2\theta}{2-\theta}$, then, with probability at least $\frac{c}{n^{C-1}}$, the matrix $A$ is the adjacency matrix of a $(s,d,\delta,\theta)$-quasi regular lossless expander with $d=mp$.
\end{theorem}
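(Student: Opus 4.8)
The plan is to verify the two defining properties of a $(s,d,\delta,\theta)$-quasi regular lossless expander separately, each holding with high probability, and then combine them via a union bound. The two events are: (i) every left vertex has degree in the interval $[(1-\delta)d, (1+\delta)d]$ with $d = mp$, and (ii) for every subset $J \subseteq L$ with $|J| \le s$, the neighborhood satisfies $|R(J)| \ge (1-\theta)d|J|$. I would first handle the degree concentration (i), which is the easier part. The left degree of vertex $j \in L$ is the column sum $\sum_{i=1}^m A_{i,j}$, a sum of $m$ i.i.d.\ Bernoulli $p$ random variables with mean $mp = d$. A standard multiplicative Chernoff bound gives that this sum deviates from $d$ by more than $\delta d$ with probability at most $2\exp(-c\delta^2 d) = 2\exp(-c\delta^2 mp)$. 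Taking a union bound over the $n$ left vertices and using the hypothesis $m \ge \frac{C}{\delta^2}\frac{\log n}{p}$, i.e.\ $\delta^2 mp \ge C\log n$, the failure probability is at most $2n \cdot n^{-cC} = 2n^{1-cC}$, which is the stated $O(n^{-(C-1)})$ form after adjusting constants.

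The expansion property (ii) is the heart of the argument and will be the main obstacle. The natural approach is a first-moment / union-bound argument over all bad sets. For a fixed $J$ with $|J| = k \le s$, consider the event that $|R(J)| < (1-\theta)dk$. Since $A$ has i.i.d.\ Bernoulli $p$ entries, the indicator that a given right vertex $i \in R$ lies in $R(J)$ is $1 - (1-p)^k$, and these events are independent across the $m$ rows; thus $|R(J)|$ is a binomial random variable $\mathrm{Bin}(m, q_k)$ with $q_k = 1-(1-p)^k$. I would estimate $\mathbb{P}(|R(J)| < (1-\theta)dk)$ via a Chernoff tail bound for this binomial, then multiply by the number of such sets $\binom{n}{k} \le (en/k)^k$ and sum over $1 \le k \le s$. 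The delicate point is that the target $(1-\theta)dk = (1-\theta)pmk$ must sit below the mean $mq_k = m(1-(1-p)^k)$ for the deviation bound to have the right sign; here the hypothesis $ps \le \frac{2\theta}{2-\theta}$ enters. Using the elementary inequalities $kp - \binom{k}{2}p^2 \le 1-(1-p)^k \le kp$, one sees that $q_k \ge kp(1 - \tfrac{(k-1)p}{2}) \ge kp(1-\tfrac{sp}{2})$, and the condition on $ps$ is precisely what guarantees $(1-\theta) \le 1 - \tfrac{sp}{2}$, so that $(1-\theta)dk$ is genuinely below $mq_k$ with a controlled gap. This gap is what makes the Chernoff exponent negative and proportional to $mpk$.

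Concretely, for the union bound to close I need the per-set failure probability to decay fast enough to beat the $(en/k)^k$ entropy factor. The Chernoff bound should yield something of the form $\exp(-c' mpk)$ for the event on a fixed $k$-set, and the combined bound over all $k$-sets is controlled by $(en/k)^k \exp(-c' mpk) = \exp\bigl(k(\log(en/k) - c'mp)\bigr)$. Since $m \ge \frac{C}{\delta^2}\frac{\log n}{p}$ forces $mp \ge \frac{C\log n}{\delta^2} \gg \log(en/k)$ for $C$ large enough (absorbing the $\delta^{-2}$ factor into the constant), each exponent is negative and bounded away from zero, and summing the geometric-type series over $k$ from $1$ to $s$ gives a total failure probability of the desired order $n^{-(C-1)}$ up to absolute constants. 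The main technical care is in making the constants in the Chernoff exponent and the relation between $\theta$, $\delta$, and $ps$ consistent; in particular I expect the constraint $0 < \theta < \frac{2}{3}$ to arise from ensuring the binomial deviation argument stays in the regime where the multiplicative Chernoff bound is valid and the gap $mq_k - (1-\theta)dk$ is a fixed fraction of the mean. Finally, I would combine the failure probabilities from (i) and (ii) by a union bound to obtain the claimed conclusion with probability at least $\frac{c}{n^{C-1}}$, with $d = mp$.
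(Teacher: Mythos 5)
Your proposal is correct and follows essentially the same route as the paper: Chernoff plus a union bound over columns for the degree condition, and for expansion the observation that $|R(J)|$ is $\mathrm{Bin}(m,1-(1-p)^k)$, with the hypothesis $ps\le\frac{2\theta}{2-\theta}$ used exactly as you describe to place $(1-\theta)dk$ a fixed multiplicative factor below the mean before applying Chernoff and summing the $\binom{n}{k}$ entropy terms over $k\le s$. The only minor discrepancy is the role of $\theta<\frac23$: in the paper it simply ensures $\frac{2\theta}{2-\theta}<1$, hence $pk<1$, so the second-order Taylor bound $1-(1-p)^k> pk-\frac{p^2k^2}{2}>\frac{pk}{2}$ is valid.
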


\begin{proof}
It is sufficient to establish that, with high probability, the matrix $A$ has $(1\pm \delta) d$ ones per column (condition 1) and every $m\times k$ submatrix, with $k\le s$, has at least $(1-\theta)dk$ nonzero rows (condition 2).
\\

\textbf{Condition 1}: Each column has $mp$ ones on average. By Chernoff deviation \cite[Exercise 2.3.5]{vershynin2018high}, the probability that the number of ones deviates from the average more than $\delta mp$ is $O(e^{-\delta^2mp})$. By union bound, the probability that exits a column with number of ones larger than $(1+\delta)mp$ or less than $(1-\delta)mp$ is $O(ne^{-\delta^2 mp}) = O(\frac{1}{n^{C-1}})$. The latter vanishes if $C>1$. 
\\

\textbf{Condition 2}: Now we proceed to the second requirement. For each $k \in [s]$, the number of nonzero rows in a $m\times k$ submatrix follows a Binomial distribution with parameters $m$ and $q:=1-(1-p)^k$. We fix a $m\times k$ submatrix and then the number of nonzero rows has average $mq$. Let $\varepsilon >0$ be a constant to be chosen later, we write
\begin{equation*}
    \begin{split}
        mq(1-\varepsilon)= m(1-(1-p)^k)(1-\varepsilon) &> m(1-e^{-pk})(1-\varepsilon)\\
        &> m(pk-\frac{p^2k^2}{2!})(1-\varepsilon) \ \text{(by Taylor expansion)}\\
        &>m(1-\theta)pk \ \text{for $pk< \frac{2\theta}{2-\theta}<1$}
    \end{split}
\end{equation*}

For the last bound to hold we choose $\varepsilon := \frac{\theta}{2}$. By Chernoff deviation, we get the probability that the number of nonzero rows is less than $(1-\varepsilon)mq$ is $O(e^{-\varepsilon^2mq})$. By union bound, the probability that exists a submatrix $m\times k$ with less than $(1-\theta)dk$ nonzero rows is of order at most
\begin{equation*}
    \sum_{k=1}^s \binom{n}{k} e^{-\varepsilon^2mq}.
\end{equation*}
Now, by Taylor expansion, $mq>m(1-e^{-pk}) > m(pk-\frac{p^2k^2}{2!})>\frac{mpk}{2}$, the last inequality follows from the fact that $pk\le ps< 1$ (recall that $\theta<\frac{2}{3}$). Together with the fact that $mp>C\log n$, we obtain $e^{-\varepsilon^2mq} \le e^{-Ck\log n}$ while $\binom{n}{k} \sim e^{k\log \frac nk}$. We can conclude that the probability above is at most $O(\sum_{k=1}^s \frac{1}{n^{(C-1)k}}) = O(\frac{1}{n^{C-1}})$ and this clearly vanishes for large enough constant $C>0$.
\end{proof}

Combining Theorems \ref{thm:important1} and \ref{thm:important2}, we immediately obtain Theorem \ref{thm:Main}.

\section{Lower Bound \& Phase Transition}
\label{sec:lowerbounds}
In this section, we investigate the optimality of our results. The lower bound presented below suggests that a phase transition for the number of measurements $m$ according to the range of $p$. The proof of such bound is quite simple but the intuition is more delicate. To start, notice that a necessary condition to establish exact recovery guarantees by minimizing the $\ell_0$ norm is that every collection of $2s$ columns must be linearly independent. Similarly, a necessary condition for the $\ell_1$ minimization to work is that every collection of $s$ columns are linearly independent. It is natural to ask the following question: What is the probability that exists a collection of $s$ columns that are linearly dependent? 

It turns out that, for some ranges of $s$, the question above is a delicate problem in the theory of invertibility of random matrices. A line of work \cite{tao_vu_2006,tikhomirov2020singularity,litvak2020singularity} suggests that the phenomenon is local, i.e, the linear dependence is caused by one or two columns, in particular the presence of a zero column. For example, for a square $n\times n$ Bernoulli $p$ matrix, if $p<\frac{\log n}{n}$, then the matrix has a zero column with positive probability, on the other hand if $p\ge(1+c)\frac{\log n}{n}$, where $c>0$ is a constant, then the matrix is invertible with high probability \cite[Corollary 1.3]{basak2021sharp}. We do not use any sophisticated machinery to establish the lower bound, but we borrow the intuition from this line of work. Our lower bound simply estimates the probability to have a zero column.

\begin{proposition}{(Lower bound)}
Let $A \in \{0,1\}^{m\times n}$ be a Bernoulli $p$ matrix. If the matrix $A$ satisfies a nullspace property of order $s$ with high probability, then it is necessary that $m\ge c(s\log\frac{en}{s} + \frac{\log n}{p})$ for an absolute constant $c>0$.
\end{proposition}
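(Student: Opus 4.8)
The plan is to prove the two summands separately and then combine them via $m\ge\max(a,b)\ge\tfrac12(a+b)$. The first term, $s\log\frac{en}{s}$, is the classical information-theoretic barrier and is entirely deterministic: any matrix satisfying the nullspace property of order $s$ (random or not) permits exact recovery of every $s$-sparse vector by $\ell_1$-minimization, and the standard counting / Gelfand-width argument (see \cite{foucart2013invitation}) then forces $m\ge c_1 s\log\frac{en}{s}$. No probabilistic input is needed for this half, so I would simply invoke it.

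The second term, $\frac{\log n}{p}$, is where the Bernoulli model enters, and I would derive it from the zero-column obstruction hinted at in the text. First I would record the elementary observation that a zero column is fatal: if column $j$ of $A$ vanishes then $e_j\in\ker A$, and choosing $S=\{j\}$ (so $|S|=1\le s$) and $v=e_j\in S^{n-1}$ in the nullspace inequality yields $1=\|v_S\|_q\le\frac{\rho}{s^{1-1/q}}\|v_{S^c}\|_1+\tau\|Av\|=0$, a contradiction. Hence a zero column is incompatible with the nullspace property of \emph{any} order, so if $A$ satisfies the property with high probability it must be free of zero columns with high probability.

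Next I would estimate the probability of that event. Let $Z$ count the all-zero columns; since the columns are i.i.d.\ and each vanishes with probability $(1-p)^m$, independence gives $\mathbb{P}(Z=0)=(1-(1-p)^m)^n\le\exp(-n(1-p)^m)$. Using $-\log(1-p)\le 2p$ for $p\le\tfrac12$ we have $(1-p)^m\ge e^{-2pm}$, so $\mathbb{P}(Z=0)\le\exp(-n\,e^{-2pm})$. If $m\le\frac{\log n}{4p}$ then $2pm\le\tfrac12\log n$, whence $n\,e^{-2pm}\ge n^{1/2}$ and $\mathbb{P}(Z=0)\le e^{-\sqrt n}\to 0$. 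Thus in this regime a zero column is present with probability tending to $1$, contradicting the previous paragraph; therefore the nullspace property holding with high probability forces $m> \frac{\log n}{4p}$, i.e.\ $m\ge c_2\frac{\log n}{p}$.

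Finally I would combine the two estimates: $m\ge\max\!\big(c_1 s\log\frac{en}{s},\,c_2\frac{\log n}{p}\big)\ge\frac{\min(c_1,c_2)}{2}\big(s\log\frac{en}{s}+\frac{\log n}{p}\big)$, giving the claim with $c=\tfrac12\min(c_1,c_2)$. The arithmetic is routine; the only point genuinely needing care is the bookkeeping of the phrase \emph{with high probability}. One must show the zero-column event has probability tending to $1$ (not merely bounded away from $0$) whenever $m\lesssim\log n/p$, so that it truly contradicts the hypothesis under any reasonable high-probability convention, and this is exactly what the first-moment-plus-independence bound $\mathbb{P}(Z=0)\le\exp(-n\,e^{-2pm})$ delivers. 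The deeper conceptual point—why the obstruction is the \emph{local} event of a vanishing column rather than a global spectral phenomenon, as in the invertibility literature cited—is what makes the intuition delicate, but it is not required to carry out the proof itself.
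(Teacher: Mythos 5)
Your proposal is correct and follows essentially the same route as the paper: the standard Gelfand-width bound for the first term and the zero-column obstruction with the estimate $(1-(1-p)^m)^n \le e^{-n(1-p)^m}$ for the second, combined by averaging. The only (minor, beneficial) difference is that you push the calculation to show the zero-column event has probability tending to $1$ for $m\le \frac{\log n}{4p}$, whereas the paper is content to show at the threshold $m=\frac{c\log n}{p}$ that the failure probability is bounded below by $1-e^{-1}$; both suffice for the stated necessity claim.
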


\begin{proof}
The lower bound  $m\gtrsim s\log\frac{en}{s}$ is quite standard in the literature (Theorem 10.11 \cite{foucart2013invitation}). Now we evaluate the probability to have a zero column. If the $j$-th column of $A$ is a zero column, then it fails to recover the vector $e_j$, the $j$-th vector of the canonical basis. We write,
\begin{equation*}
\begin{split}
    \mathbb{P}( \exists j \in [n]: Ae_j = 0) &= 1-\mathbb{P}(\forall j\in [n]: Ae_j \neq 0)\\
    &= 1-\mathbb{P}(Ae_1 \neq 0)^n \ \text{(by independence)}\\
    &= 1- (1-(1-p)^m)^n.
    \end{split}
\end{equation*}
If we obtain that, for a certain $p$, $(1-(1-p)^m)^n \rightarrow C<1$, then we get a lower bound on $p$. Notice that $(1-(1-p)^m)^n \le e^{-(1-p)^m n}$. We fix for a moment $m=\frac{c\log n}{p}$, where $c>0$ is a constant to be chosen later. Since $p\le \frac 12$ we have $(1-p)^{1/p} \ge \frac{1}{\sqrt{2}}$ (the function $(1-x)^{1/x}$ is non-increasing from $0$ to $1$), therefore $(1-p)^{c\log n/p} \ge (\frac{1}{\sqrt{2}})^{c\log n} = e^{-c(\log \sqrt{2})\log n} =  e^{-\log n} = \frac{1}{n}$ for $c=\frac{1}{\log \sqrt{2}}$. Finally, we obtain that $(e^{-(1-p)^m})^n \le (e^{-\frac{1}{n}})^n = e^{-1}$. We obtained that $m \ge \frac{1}{\log \sqrt{2}}(\frac{\log n}{p})$ is necessary. We average the two lower bounds to finish the proof.
\end{proof}
We point out that the lower bound and Theorem \ref{thm:Main} show that, in the very sparse regime $p\lesssim\frac{1}{s}$, the main reason that causes the failure of $\ell_1$ relaxation is to have a zero column. In the context of invertibility, this shows that the existence of an absolute constant $C>1$ such that, for $p=C\frac{\log n}{n}$,  every collection of $\frac{n}{C\log n}$ columns of a square Bernoulli $p$ matrix is linearly independent with high probability. The factor $C\log n$ may be removed if we extend the range of $p$ in Theorem \ref{thm:Main} (see the conjecture below), however the main purpose of writing the result in the language of invertibility is to highlight the same local phenomenon and not to prove invertibility statements.

An immediate consequence of the lower bound is that our main result Theorem \ref{thm:Main} is sharp up to absolute constants, but it may admit an extension for larger values of $p$. Since the lower bound has a matching upper bound up to an absolute constant for $p$ constant and also for $p\lesssim\frac{1}{s}$, we conjecture the following result:

\begin{conjecture}
Let $A \in \{0,1\}^{m\times n}$ be a matrix formed by i.i.d entries Bernoulli $p$. There exists constants $C_1,C_2>0$ such that if $m\ge\max\{C_1s\log\frac{en}{s},C_2\frac{\log n}{p}\}$, then $A$ satisfies the $\ell_1$ robust nullspace property of order $s$ for some $\rho <1$ and $\tau>0$. Moreover, we can take any constant $C_2 >1$. 
\end{conjecture}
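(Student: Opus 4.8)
The plan is to prove the two halves of the $\max$ separately, since the conjecture asserts sufficiency of $m\gtrsim s\log\frac{en}{s}$ in the \emph{dense} regime $p\gtrsim\frac 1s$ and of $m\gtrsim\frac{\log n}{p}$ in the \emph{sparse} regime $p\lesssim\frac 1s$. The sparse side is essentially settled: under the mild $s=O(n^\beta)$ (as in the discussion after Theorem~\ref{thm:Main}) one has $\frac{\log n}{p}\sim s\log n\sim s\log\frac{en}{s}$ at the threshold $p\sim\frac 1s$, so Theorem~\ref{thm:Main} already gives the claim there, and for $p\le\frac{c_1}{s}$ the term $\frac{\log n}{p}$ dominates the $\max$ and Theorem~\ref{thm:important2} applies. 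The genuinely open part is the \emph{intermediate} range $\frac 1s\lesssim p\ll 1$, where the quasi-regular expander machinery provably breaks down: the hypothesis $ps\le\frac{2\theta}{2-\theta}$ of Theorem~\ref{thm:important2} forces $ps\lesssim 1$, because once $pk\gg 1$ a set $J$ of $k\le s$ columns has $|R(J)|\approx m(1-(1-p)^k)$ saturating near $m$ instead of growing like $(1-\theta)dk$, so lossless expansion of order $s$ cannot hold.

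A tempting shortcut is a monotonicity-in-$p$ argument: couple $A\sim\mathrm{Bern}(p)$ and $A'\sim\mathrm{Bern}(p')$ with $p<p'$ through a common uniform field so that $A\le A'$ entrywise, and transfer the NSP from the sparse matrix (handled at $p\sim\frac 1s$) up to all larger $p'$ at the \emph{same} $m\sim s\log\frac{en}{s}$. I expect this to fail as stated, since $\|A'v\|_1=\|Av+Bv\|_1$ with $B=A'-A\ge 0$ can \emph{decrease} for adversarial signed $v$, so the worst-case NSP functional is not monotone under adding nonnegative mass. I would therefore not rely on the coupling and attack the intermediate regime directly.

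For the direct argument I would abandon the two-sided $\ell_2$ RIP route of \cite{jeong2020sub,kueng2017robust}, which is exactly what injects the spurious factors of $\frac 1p$: the isotropic normalization $\tilde A=\frac{1}{\sqrt{p(1-p)}}(A-\mathbb E A)$ inflates the subgaussian constant to order $(p\log\frac 1p)^{-1/2}$, and RIP depends on that constant through a positive power. Instead I would work with the one-sided $\ell_1$ functional $F(v)=\|Av\|_1=\sum_{i=1}^m|\langle a_i,v\rangle|$ directly on the descent cone $\mathcal C=\{v\in S^{n-1}:\|v_{S^c}\|_1\le\rho^{-1}\|v_S\|_1,\ |S|\le s\}$ and prove $\inf_{v\in\mathcal C}F(v)\gtrsim\|v_S\|_1$. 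The program is: (i) for fixed $v\in\mathcal C$ compute $\mu(v)=\mathbb E|\langle a_i,v\rangle|$ and bound it below to the right order with \emph{no} extra power of $p$; (ii) since each summand is bounded, apply a Bernstein-type concentration with the sharp variance scaling (the refined inequalities of \cite{jeong2020sub} being the natural tool) to control $F(v)$ around $m\mu(v)$; (iii) discretize $\mathcal C$ over the $\binom{n}{s}$ sparse faces and upgrade to a uniform bound by chaining. The budget check is that the net costs $\sim s\log\frac{en}{s}$ in the exponent, so step~(ii) must furnish deviation $e^{-cm\mu(v)^2/\sigma^2}$ with $m\mu^2/\sigma^2\gtrsim s\log\frac{en}{s}$; making this hold \emph{without} a residual $\frac 1p$ is the crux of the whole problem and where I expect the main difficulty to lie.

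Finally, to obtain the sharp constant (``any $C_2>1$'') I would isolate the first-order obstruction. The dominant failure mode, consistent with the local phenomenon discussed after the Proposition, is a zero column: a fixed column vanishes with probability $(1-p)^m\approx e^{-pm}$, so the expected number of zero columns is $\approx ne^{-pm}$, which tends to $0$ precisely when $pm>\log n$, with sharp constant $1$. I would refine the $k=1$ term in the union bound of Theorem~\ref{thm:important2} (exactly the single-column/degree term) by a second-moment computation pinning its threshold at $m=\frac{\log n}{p}$, so that any $C_2>1$ annihilates it, while the remaining $k\ge 2$ terms and the $\ell_1$-NSP estimate above are absorbed into the $C_1 s\log\frac{en}{s}$ budget. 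Matching the lower bound's constant in this way would complete the conjecture.
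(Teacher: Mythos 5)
This statement is an explicitly open conjecture in the paper: the author offers no proof, only the matching evidence at the two extremes ($p$ constant via prior RIP/small-ball results, and $p\lesssim\frac 1s$ via Theorem \ref{thm:Main}) and a remark that all known techniques fail in between. Your proposal maps that landscape accurately --- the reduction of the sparse side to Theorems \ref{thm:important1}--\ref{thm:important2}, the observation that $ps\lesssim 1$ is forced by saturation of $|R(J)|$, and the correct rejection of the entrywise coupling (the NSP functional is indeed not monotone under adding a nonnegative matrix $B$, since $\|Av+Bv\|_1$ can drop for signed $v$) --- but it does not close the conjecture. The core of your plan, steps (i)--(iii) for the intermediate regime $\frac 1s\lesssim p\ll 1$, is a restatement of the open problem rather than a solution: you yourself flag that obtaining $m\mu(v)^2/\sigma^2\gtrsim s\log\frac{en}{s}$ uniformly over the cone \emph{without} a residual power of $\frac 1p$ ``is the crux of the whole problem.'' Nothing in the proposal supplies that estimate; in particular the required lower bound on $\mu(v)=\mathbb E|\langle a,v\rangle|$ for signed $v$ with heavy cancellation is exactly the small-ball-type obstruction the paper identifies as the reason the Mendelson and RIP routes degrade, now reappearing in $L_1$ form.

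The ``moreover'' clause ($C_2>1$ arbitrary) also has an unaddressed obstruction beyond the zero-column event. Your second-moment refinement of the $k=1$ term only shows that columns are nonzero once $pm>(1+\epsilon)\log n$; but the argument of Theorems \ref{thm:important1}--\ref{thm:important2} additionally needs quasi-regularity with $\delta<\frac{1}{19}$, and the Chernoff bound for the column degrees requires $\delta^2 mp\gtrsim\log n$, i.e.\ $mp\gtrsim 19^2\log n$. In the regime where the $\max$ is attained by $C_2\frac{\log n}{p}$ this term cannot be ``absorbed into the $C_1 s\log\frac{en}{s}$ budget'' as you claim, since there $\frac{\log n}{p}\gg s\log\frac{en}{s}$. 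So even granting the intermediate-regime program, driving $C_2$ down to any constant greater than $1$ would require replacing the degree-concentration step, not merely sharpening the $k=1$ union-bound term.
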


The "moreover" part of the conjecture is not relevant for compressed sensing, but it would capture the phase transition in \cite[Corollary 1.3]{basak2021sharp}. Notice that, if the conjecture is true, then a natural phase transition occurs:
\begin{equation*}
    m(p) \sim \begin{cases}
    s\log\frac{en}{s}, \ \text{for $ \frac{\log(n)}{s\log \frac{en}{s}}\lesssim p\le \frac{1}{2}$} \\
    \frac{\log n}{p}, \ \text{for  $p\ll \frac{\log(n)}{s\log \frac{en}{s}} $}
    \end{cases}
\end{equation*}
It seems that the best choice of $p$ is $p^{\ast} \sim \frac{\log(n)}{s\log \frac{en}{s}}$ because it balances optimality in terms of measurements with the smallest possible column sparsity of the measurement matrix.
We end this section with some remarks about the conjecture: The small ball method and the standard RIP techniques completely fail in the sparse regime ($p \rightarrow 0$). Moreover, the expander technique used in this manuscript fails in the regime $p\gg \frac{1}{s}$ because the expansion property requires more edges than the maximum number of edges possible. The graph becomes too dense. A natural approach would be to interpolate between $\ell_2$ RIP used in the dense regime and a modified version of $\ell_1$ RIP. Unfortunately, $\ell_p$ RIP fails to give sharp results for $p>1$ and $p\neq 2$ \cite{7534879}.

\section{Applications in nonnegative compressed sensing}
\label{sec:applications}
In this section, we provide a practically relevant application of our main Theorem \ref{thm:Main}. In many applications of compressed sensing, we do not have any knowledge of the noise level. In this sense, noise blind compressed sensing is the topic dedicated to the study of algorithms that perform sparse recovery without using any information about the noise as an input. Arguably, the most common way to tackle this question is to establish the so called quotient property \cite{foucart2013invitation}, however no estimate is known for non-symmetric distributions. Instead, we use the special structure of nonnegative vector recovery to establish a noise blind guarantee for the recovery of nonnegative vectors with a Bernoulli $p$ random matrix. Our analysis is inspired by the work in \cite{kueng2017robust}. To start, we recall the notion of positive orthant
\begin{equation*}
    \mathcal{M}^{+}:= \{A: \exists t \in \mathbb{R}^m \ \text{such that} \ A^{T}t >0\}.
\end{equation*}
The following result establishes that Bernoulli $p$ random matrices obeys the positive orthant condition.

\begin{proposition}{\cite{kueng2017robust}}
\label{Theorem12:Kueng}
Suppose that $A \in \{0,1\}^{m\times n}$ is a Bernoulli $p$ matrix and set 
\begin{equation*}
    w = A^{T}t \in \mathbb{R}^n \quad \text{with} \quad t:=\frac{1}{pm}\mathbf{1}_m \in \mathbb{R}^m.
\end{equation*}
Then, with probability at least $1-ne^{-\frac{3}{8}p(1-p)m}$,
\begin{equation*}
    \max_{i\le n}|\langle e_i,w\rangle| \le \frac{3}{2} \quad \text{and} \quad \min_{i\le n}|\langle e_i,w\rangle| \ge \frac{1}{2}.
\end{equation*}
In particular, $A\in \mathcal{M}^{+}$.
\end{proposition}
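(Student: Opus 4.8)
The plan is to read off the coordinates of $w$ explicitly and then reduce the two displayed inequalities to a single two-sided concentration statement for the column sums of $A$. Writing $X_i := \sum_{j=1}^m A_{j,i}$ for the number of ones in the $i$-th column, the definition of $w$ gives
\[
\langle e_i, w\rangle = (A^{T} t)_i = \frac{1}{pm}\sum_{j=1}^m A_{j,i} = \frac{X_i}{pm},
\]
so each coordinate of $w$ is a normalized column sum. Since the entries of $A$ are i.i.d.\ Bernoulli $p$, the variable $X_i$ is $\mathrm{Binomial}(m,p)$ with mean $\mathbb{E}X_i = mp$ and variance $mp(1-p)$, whence $\mathbb{E}\langle e_i,w\rangle = 1$. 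Note also that $\langle e_i,w\rangle\ge 0$ always, so the absolute values in the statement are redundant. The two desired inequalities $\tfrac12\le\langle e_i,w\rangle\le\tfrac32$ are therefore exactly equivalent to the single event $\lvert X_i - mp\rvert \le \tfrac12 mp$, i.e.\ a relative deviation of the column sum from its mean by at most one half.

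Next I would control each column by an exponential tail bound and take a union bound over the $n$ columns. Applying a Chernoff/Bernstein inequality to the centered sum $X_i - mp$, whose summands are bounded by $1$ and whose total variance is $mp(1-p)$, yields a bound of the form $\mathbb{P}\bigl(\lvert X_i - mp\rvert > \tfrac12 mp\bigr) \le e^{-c\,p(1-p)m}$; it is precisely the variance $mp(1-p)$ that produces the $p(1-p)$ factor in the exponent, and tracking the constants is what pins down the value $\tfrac38$. A union bound over the identically distributed columns then gives
\[
\mathbb{P}\Bigl(\exists\, i\le n:\ \lvert X_i - mp\rvert > \tfrac12 mp\Bigr) \le n\, e^{-\frac38 p(1-p)m},
\]
which is the claimed failure probability; on the complementary event every coordinate of $w$ lies in $[\tfrac12,\tfrac32]$, giving both displayed bounds at once.

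The membership $A\in\mathcal{M}^{+}$ then follows from the lower bound alone: on the good event $\langle e_i,w\rangle\ge\tfrac12>0$ for every $i$, so $w = A^{T}t > 0$ entrywise, and $t=\tfrac{1}{pm}\mathbf{1}_m$ is exactly the witness vector required in the definition of $\mathcal{M}^{+}$.

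The step requiring the most care is the concentration estimate, and within it the lower tail. The constraint $\langle e_i,w\rangle\ge\tfrac12$, equivalently $X_i\ge\tfrac12 mp$, is the binding one: it is both the harder tail to capture with a clean exponent and the only one actually needed for the conclusion $A\in\mathcal{M}^{+}$. The delicate point is to select the deviation inequality so that the exponent is governed by the variance $mp(1-p)$ rather than by a cruder quantity, and to carry the multiplicative constant through to the stated $\tfrac38$; the upper tail $X_i\le\tfrac32 mp$ is handled in exactly the same way and is never the bottleneck. The remaining ingredients — the explicit evaluation of $w$, the union bound, and the positivity conclusion — are routine.
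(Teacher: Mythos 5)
First, a point of comparison: the paper does not prove this proposition at all --- it is quoted from Kueng and Jung \cite{kueng2017robust} and used as a black box --- so there is no internal proof to measure your argument against. Your structural reduction is certainly the intended one: $\langle e_i,w\rangle=X_i/(pm)$ with $X_i\sim\mathrm{Binomial}(m,p)$, the two displayed inequalities are exactly the event $|X_i-mp|\le\tfrac12 mp$, one applies a per-column tail bound plus a union bound over the $n$ columns, and positivity of $w$ on the good event exhibits $t$ as the witness for $A\in\mathcal{M}^{+}$. All of that is fine.

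The gap is the single step you deferred, and it cannot be closed as stated. Carrying out Bernstein with $u=mp/2$, $\sigma^2=mp(1-p)$ and $|A_{ji}-p|\le 1-p$ gives the exponent $\frac{u^2/2}{\sigma^2+(1-p)u/3}=\frac{3mp}{28(1-p)}$, which is smaller than $\frac38p(1-p)m$ for every $p$ bounded away from $\tfrac12$, so ``tracking the constants'' does not produce $\tfrac38$. Worse, no inequality can: for small $p$ the binding tail is the upper one, with $\mathbb{P}\bigl(X_i\ge\tfrac32 mp\bigr)\ge c\,(mp)^{-1/2}e^{-mD(3p/2\,\|\,p)}$ and $D(3p/2\,\|\,p)=\bigl(\tfrac32\ln\tfrac32-\tfrac12\bigr)p+O(p^2)\approx 0.108\,p$, whereas $\tfrac38 p(1-p)\approx 0.375\,p$; hence the true per-column failure probability eventually exceeds $e^{-\frac38 p(1-p)m}$, and for suitable $(m,n,p)$ (e.g.\ $p=0.1$, $m=10^3$, $n=10^6$) the displayed probability bound itself fails. (Two smaller slips: the two-sided union of tails costs a factor $2$ that you drop, and for Bernoulli sums the lower tail is the \emph{lighter} one, not the harder one as you claim.) What is true, and is all the paper needs in Theorem \ref{thm:applied}, is the same statement with $\tfrac38p(1-p)$ replaced by $c\,p$ for a small absolute constant (multiplicative Chernoff gives $2ne^{-mp/10}$); you should prove that corrected version rather than chase the constant $\tfrac38$.
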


We proceed with two easy lemmas, but they are crucial to our result. For notation simplicity, we write
\begin{equation*}
    \kappa(A):= \{\|W\|\|W^{-1}\| \ |\exists t \in \mathbb{R}^m, W=\diag(A^{T}t)>0\}.
\end{equation*}

\vspace{0.2cm}
\begin{lemma}
\label{lemma5-kueng}
Suppose the matrix $A$ satisfies the $\ell_1$ robust nullspace property of order $s$, with parameters $\rho$ and $\tau$, with respect to the $\ell_1$ norm. If $W=\diag(w)$ with $w\ge 0$, then $AW^{-1}$ also satisfies the $\ell_1$ robust nullspace property of order $s$, with parameters $\Tilde{\rho}:=\kappa(W)\rho$ and $\Tilde{\tau}:=\|W\|\tau$, with respect to the $\ell_1$ norm.
\end{lemma}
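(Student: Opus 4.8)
The plan is to verify the defining inequality of the $\ell_1$ robust nullspace property directly for the matrix $AW^{-1}$. Since both sides of that inequality are homogeneous of degree one in the test vector, it suffices to establish, for every subset $S \subset [n]$ with $|S| \le s$ and every $u \in \mathbb{R}^n$, the bound $\|u_S\|_1 \le \tilde\rho \|u_{S^c}\|_1 + \tilde\tau \|AW^{-1}u\|_1$; restricting to the sphere $S^{n-1}$ as in the definition is then automatic. The natural first move is the change of variables $v := W^{-1}u$, equivalently $u = Wv$ (this requires $w>0$ so that $W$ is invertible, the implicit hypothesis behind $W^{-1}$; in the intended application $w \ge \tfrac12$). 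With this substitution the measurement term becomes $\|AW^{-1}u\|_1 = \|Av\|_1$, which is exactly the measurement term appearing in the nullspace property of $A$.

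Next I would exploit that $W = \diag(w)$ commutes with coordinate restriction: for any index set $T$ one has $(Wv)_T = W v_T$ and $(W^{-1}u)_T = W^{-1}u_T$, since a diagonal matrix acts coordinatewise and hence preserves the support pattern. Applying this with $T=S$ gives $u_S = Wv_S$, and applying the $\ell_1$ nullspace property of $A$ to the vector $v$ yields $\|v_S\|_1 \le \rho\|v_{S^c}\|_1 + \tau\|Av\|_1$. It then remains to transfer each term back to $u$ through $W$. Using $\|Wv_S\|_1 \le \|W\|\,\|v_S\|_1$ on the left and $\|v_{S^c}\|_1 = \|W^{-1}u_{S^c}\|_1 \le \|W^{-1}\|\,\|u_{S^c}\|_1$ on the right, one chains the estimates to obtain $\|u_S\|_1 \le \|W\|\|W^{-1}\|\rho\,\|u_{S^c}\|_1 + \|W\|\tau\,\|AW^{-1}u\|_1$, which is precisely the claimed property with $\tilde\rho = \kappa(W)\rho = \|W\|\|W^{-1}\|\rho$ and $\tilde\tau = \|W\|\tau$.

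The computation is routine, so the only point demanding care — and the one I would flag as the main (mild) obstacle — is the meaning of the operator norm $\|W\|$ in these scaling inequalities. The bounds $\|Wv_S\|_1 \le \|W\|\|v_S\|_1$ and $\|W^{-1}u_{S^c}\|_1 \le \|W^{-1}\|\|u_{S^c}\|_1$ are $\ell_1\to\ell_1$ operator-norm estimates, whereas $\|W\|$ in the statement denotes the spectral norm; for a diagonal matrix these coincide, both equalling $\max_i w_i$ (respectively $\|W^{-1}\| = 1/\min_i w_i$), so the identification is legitimate and $\kappa(W)$ is just the condition number $\max_i w_i/\min_i w_i$. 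I would close by noting that $\tilde\rho = \kappa(W)\rho$ is a genuine nullspace constant (that is, $\tilde\rho<1$) only when $\kappa(W)\rho<1$; this is harmless here since in the intended use $w$ lies in $[\tfrac12,\tfrac32]$ by Proposition \ref{Theorem12:Kueng}, forcing $\kappa(W)\le 3$.
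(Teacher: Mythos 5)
Your proof is correct and is essentially the paper's own argument: the paper writes $\|v_S\|_1=\|WW^{-1}v_S\|_1\le\|W\|\,\|(W^{-1}v)_S\|_1$, applies the nullspace property of $A$ to $W^{-1}v$, and bounds $\|(W^{-1}v)_{S^c}\|_1\le\|W^{-1}\|\,\|v_{S^c}\|_1$, which is exactly your change of variables $u=Wv$ written without renaming. Your side remarks (that $\|W\|$ may be read as the $\ell_1\to\ell_1$ norm since $W$ is diagonal, that $w>0$ is implicitly needed, and that $\kappa(W)\rho<1$ is required for the conclusion to be useful) are accurate and consistent with how the lemma is applied in Theorem \ref{thm:applied}.
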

\begin{proof}
We write
\begin{equation*}
    \|v_{S}\|_1 = \|WW^{-1}v_{s}\|_1\le \sup_{z:\|z\|_1=1} \|Wz\|_1 \|W^{-1}v_{S}\|_1 = \|W\|\|(W^{-1}v)_{S}\|_1.
\end{equation*}
The last equality follows from the fact that $W$ is a diagonal matrix. Now we use the $\ell_1$ robust nullspace property to obtain
\begin{equation*}
\begin{split}
    \|v_{S}\|_1 &\le \|W\|(\rho \|(W^{-1}v)_{S^c}\|_1 + \tau \|AW^{-1}v\|_1) \\
    &\le (\|W\|\|W^{-1}\|\rho)\|v_{S^c}\|_1 + (\tau \|W\|)\|AW^{-1}v\|_1.
    \end{split}
\end{equation*}
In the last inequality, we used the fact that $W^{-1}$ is also a diagonal matrix.
\end{proof}
\begin{lemma}
\label{lemma6-kueng}
Let $A \in \mathbb{R}^{m\times n}$ be a matrix and suppose the existence of a vector $t \in \mathbb{R}^m$ such that $w=A^{T}t$ has nonnegative coordinates. If $W = \diag(w)$, then for all nonnegative vectors $x,z \in \mathbb{R}^n$,
\begin{equation*}
    \|Wz\|_1 - \|Wx\|_1 \le \|t\|_{\infty}\|A(z-x)\|_1.
\end{equation*}
\end{lemma}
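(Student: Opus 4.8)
The plan is to exploit the nonnegativity assumptions to strip the absolute values from the $\ell_1$ norms, which collapses the left-hand side into a single inner product, and then to move the transpose across via the adjoint relation and finish with a one-line duality bound.

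First I would observe that, since $w = A^T t$ has nonnegative coordinates and $z$ is itself nonnegative, each summand $w_i z_i$ is nonnegative, so that $\|Wz\|_1 = \sum_{i=1}^n |w_i z_i| = \sum_{i=1}^n w_i z_i = \langle w, z \rangle$, and identically $\|Wx\|_1 = \langle w, x \rangle$. This is the one place where the nonnegativity of both $w$ and the vectors $x,z$ is genuinely used, and it is the crux of the argument: without it the $\ell_1$ norm would retain its absolute values and would not linearize into an inner product. Subtracting then gives the exact identity
\begin{equation*}
  \|Wz\|_1 - \|Wx\|_1 = \langle w, z - x \rangle.
\end{equation*}

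Next I would substitute $w = A^T t$ and use the defining adjoint relation $\langle A^T t, u \rangle = \langle t, A u \rangle$ with $u = z - x$, yielding $\langle w, z-x \rangle = \langle t, A(z-x) \rangle$. Note that $z - x$ need not be nonnegative, but that is irrelevant: the nonnegativity was only needed in the previous step to remove the absolute values, and the adjoint identity holds for arbitrary vectors. Finally I would bound this inner product by its absolute value and apply the Hölder (equivalently, $\ell_1$--$\ell_\infty$ duality) inequality,
\begin{equation*}
  \langle t, A(z-x) \rangle \le |\langle t, A(z-x) \rangle| \le \|t\|_\infty \, \|A(z-x)\|_1,
\end{equation*}
which is exactly the claimed bound. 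I do not anticipate any genuine obstacle here; the only subtlety worth flagging is to verify that the nonnegativity is invoked precisely where the absolute values are dropped and nowhere else, so that the chain of equalities and the closing inequality are each justified.
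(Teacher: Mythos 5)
Your proof is correct and follows essentially the same route as the paper: both arguments use the nonnegativity of $w$, $x$, $z$ to linearize $\|Wz\|_1$ into the inner product $\langle w, z\rangle = \langle t, Az\rangle$ and then conclude by H\"older. The paper merely phrases the linearization as $\|Wz\|_1 = \langle Wz, \mathbf{1}_n\rangle = \langle z, W\mathbf{1}_n\rangle$, which is the same computation you carry out coordinatewise.
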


\begin{proof}
Observe by construction of $W$,
\begin{equation*}
    \|Wz\|_1 = \langle Wz, \mathbf{1}_n\rangle  = \langle z, W\mathbf{1}_n\rangle = \langle z,\diag(A^{T}t)\mathbf{1}_n\rangle = \langle A^{T}t,z\rangle  = \langle t,Az \rangle.
\end{equation*}
Similarly, the same holds for $\|Wx\|_1$. Therefore, we obtain that
\begin{equation*}
    \|Wz\|_1 - \|Wx\|_1 = \langle t,A(z-x) \rangle \le \|t\|_{\infty}\|A(z-x)\|_1
\end{equation*}
The last step follows from H\"older's inequality.
\end{proof}

We are now in position to state the main result of this section. A similar result for the standard lossless design was recently obtained in \cite{petersen}.

\begin{theorem}
\label{thm:applied}
Let $A\in \{0,1\}^{m\times n}$ be a Bernoulli $p$ matrix. Suppose also that we receive a vector $y=Ax+e$ where $x\in \mathbb{R}^n$ is an unknown vector and $e \in \mathbb{R}^n$ is an unknown noise vector. Then there exists absolute constants $c,C>0$ such that, if $m=\frac{\log n}{p}$ with $p<\frac{1}{Cs}$, with probability at least $1-\frac{c}{n^{C-1}}$, any solution of 
\begin{equation}
\label{eq:l1-LS}
\hat{x}= \arg \min_{z\ge 0}  \|Az - y\|_1
\end{equation}
satisfies 
\begin{equation*}
    \|\hat{x}-x\|_1 \leq C_1\left (\sigma_s(x)_1 + \frac{\|e\|_1}{pm}\right ).
\end{equation*}
Here $C_1>0$ is a constant depending only on $C>0$. Moreover, under the same assumptions, with the same number of measurements and the same probabilistic guarantee,
\begin{equation}
\label{eq:l2-LS}
\hat{x}= \arg \min_{z\ge 0}  \|Az - y\|_2
\end{equation}
satisfies 
\begin{equation*}
    \|\hat{x}-x\|_1 \leq C_2\left(\sigma_s(x)_1 + \frac{\|e\|_2}{p\sqrt{m}}\right).
\end{equation*}
Here $C_2>0$ is a constant depending only on $C>0$.
\end{theorem}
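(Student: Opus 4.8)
The plan is to reduce the analysis of the residual-minimization programs \eqref{eq:l1-LS} and \eqref{eq:l2-LS} to the robust nullspace machinery of Theorem \ref{Thm:4.25-Fourcart/Rauhut}, applied not to $A$ itself but to the rescaled matrix $AW^{-1}$, where $W$ is the diagonal matrix supplied by the positive orthant condition. Concretely, I would set $t := \frac{1}{pm}\mathbf{1}_m$, $w := A^T t$ and $W := \diag(w)$, exactly as in Proposition \ref{Theorem12:Kueng}. That proposition guarantees, with high probability, that every entry of $w$ lies in $[\frac{1}{2},\frac{3}{2}]$, so that $W$ is invertible with $\|W\| \le \frac{3}{2}$, $\|W^{-1}\| \le 2$ and condition number $\kappa(W) = \|W\|\|W^{-1}\| \le 3$. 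Taking $m$ of order $\frac{\log n}{p}$ with a large enough absolute constant, both this event and the event of Theorem \ref{thm:Main} have polynomially small failure probability, and a union bound yields the overall guarantee of the stated form $1-\frac{c}{n^{C-1}}$.

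Next I would assemble the deterministic chain on this event. By Theorem \ref{thm:Main}, $A$ satisfies the $\ell_1$ robust nullspace property of order $s$ with $\rho = c_4$ and $\tau = \frac{1}{c_5 pm}$; by Lemma \ref{lemma5-kueng}, $AW^{-1}$ then satisfies it with $\tilde\rho = \kappa(W)\rho \le 3c_4$ and $\tilde\tau = \|W\|\tau \le \frac{3}{2}\tau$. I would then apply Theorem \ref{Thm:4.25-Fourcart/Rauhut} to $AW^{-1}$ with $p=q=1$, the true vector $Wx$ and the candidate $W\hat x$, noting that $AW^{-1}(W\hat x - Wx) = A(\hat x - x)$, which gives
\begin{equation*}
\|W(\hat x - x)\|_1 \le C\bigl(\|W\hat x\|_1 - \|Wx\|_1 + 2\sigma_s(Wx)_1\bigr) + D\tilde\tau\,\|A(\hat x - x)\|_1.
\end{equation*}
Each term is then estimated using the structure of $W$: the left side is bounded below by $\frac{1}{2}\|\hat x - x\|_1$ since $\min_i w_i \ge \frac{1}{2}$; the objective gap $\|W\hat x\|_1 - \|Wx\|_1$ is bounded by $\|t\|_\infty\|A(\hat x - x)\|_1 = \frac{1}{pm}\|A(\hat x - x)\|_1$ via Lemma \ref{lemma6-kueng} (this uses $x\ge 0$ together with $\hat x \ge 0$, the latter enforced by the constraint); and $\sigma_s(Wx)_1 \le \|W\|\sigma_s(x)_1 \le \frac{3}{2}\sigma_s(x)_1$, since multiplying the best $s$-term approximant of $x$ by $W$ preserves sparsity. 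Collecting these, $\|\hat x - x\|_1 \lesssim \sigma_s(x)_1 + \frac{1}{pm}\|A(\hat x - x)\|_1$, with the implied constant depending only on $c_4,c_5$.

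It then remains to bound the data-fit term $\|A(\hat x - x)\|_1$, the only place the two programs differ. For \eqref{eq:l1-LS}, optimality of $\hat x$ gives $\|A\hat x - y\|_1 \le \|Ax - y\|_1 = \|e\|_1$, so the triangle inequality yields $\|A(\hat x - x)\|_1 \le 2\|e\|_1$, producing the term $\frac{\|e\|_1}{pm}$. For \eqref{eq:l2-LS}, optimality gives $\|A(\hat x - x)\|_2 \le 2\|e\|_2$, and then $\|A(\hat x - x)\|_1 \le \sqrt{m}\,\|A(\hat x - x)\|_2 \le 2\sqrt m\,\|e\|_2$, so the last term becomes $\frac{\sqrt m}{pm}\|e\|_2 = \frac{\|e\|_2}{p\sqrt m}$, exactly as claimed.

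The one genuinely delicate point, and the step I would be most careful about, is the requirement $\tilde\rho < 1$ needed for Theorem \ref{Thm:4.25-Fourcart/Rauhut} to apply to $AW^{-1}$. The rescaling inflates the nullspace constant by the condition number $\kappa(W) \le 3$, so I must ensure the original $\rho = c_4$ is small enough that $3c_4 < 1$. This is not automatic from the bare statement of Theorem \ref{thm:Main}, but it is available from the construction behind it: in Theorem \ref{thm:important1} the constant $\rho = \frac{2\theta_{2s} + 6\delta}{1 - 4\theta_{2s} - 13\delta}$ can be driven below any prescribed threshold by choosing the expansion parameter $\theta$ and the regularity parameter $\delta$ sufficiently small, at the only cost of enlarging the absolute constants governing $m$. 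Fixing such a small $\rho$ at the outset makes $\tilde\rho \le 3\rho < 1$, so the constants $C,D$ in Theorem \ref{Thm:4.25-Fourcart/Rauhut} stay absolute and the argument closes with $C_1,C_2$ depending only on $C$.
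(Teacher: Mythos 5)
Your proposal is correct and follows essentially the same route as the paper's proof: the same choice of $t=\frac{1}{pm}\mathbf{1}_m$ and $W=\diag(A^Tt)$ via Proposition \ref{Theorem12:Kueng}, the transfer of the nullspace property to $AW^{-1}$ via Lemma \ref{lemma5-kueng}, the application of Theorem \ref{Thm:4.25-Fourcart/Rauhut} to $Wx$ and $W\hat x$ combined with Lemma \ref{lemma6-kueng}, and the same treatment of the two data-fit terms. Your explicit handling of the requirement $\tilde\rho=\kappa(W)\rho<1$ (by tuning $\theta$ and $\delta$ in Theorem \ref{thm:important1}) is a point the paper passes over with ``$\tilde\rho\sim 1$'' and is a welcome addition of rigor, not a departure in method.
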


Our theorem should be compared with \cite[Theorem 3]{kueng2017robust}. For simplicity, we assume $p\sim \frac{1}{s}$. All results are about noise blind nonnegative signal recovery, i.e, under the assumption that the vector to be recovered is nonnegative neither \eqref{eq:l1-LS} nor \eqref{eq:l2-LS} require knowledge of the noise level as an input. Besides, \cite[Theorem 3]{kueng2017robust} provides guarantees for the optimization program \eqref{eq:l2-LS} with the error bound scaling with $\frac{1}{m}$ instead of $\frac{1}{\sqrt{m}}$, however the error is measured in the $\ell_2$ norm instead of the $\ell_1$ norm. Theorem \ref{thm:applied} requires slightly more measurements than the optimal, the difference between $\log \frac{en}{s}$ and $\log n$. Again, they are of the same order in the wide range $s=O(n^{\beta})$ for some constant $0<\beta <1$. On the other hand, the measurement matrix $A$ in Theorem \ref{thm:applied} is much sparser as $p\sim \frac{1}{s}$ instead of just being a constant. Finally, we hope that some sub-linear time algorithms designed for the lossless expanders can be easily adapted to the Bernoulli $p$ matrix because the latter is a quasi-regular expander. We do not pursue this direction here. We now proceed to the proof of Theorem \ref{thm:applied}.

\begin{proof}
Thanks to Proposition \ref{Theorem12:Kueng}, we know that if we choose $t=\frac{1}{pm}\mathbf{1}_n$, then with probability at least $1-ne^{-\frac{3}{8}p(1-p)m}$, $w=A^{T}t >0$. Under this event, the matrix $W:=\diag(w)$ is invertible with condition number $\kappa:=\|W\|\|W^{-1}\| \sim 1$. By Theorem \ref{thm:Main}, with probability at least $1-O(\frac{1}{n^{C-1}})$, the matrix $A$ satisfies the $\ell_1$ robust nullspace property of order $s$ with parameters $\rho \sim 1$ and $\tau \sim \frac{1}{\log n}$ with respect to the $\ell_1$ norm. For the rest of the proof, we assume that both events holds. It is easy to see that for a sufficient large constants $C,c>0$, with probability at least $1-\frac{c}{n^{C-1}}$, both events holds simultaneously. Now, we apply Lemma \ref{lemma5-kueng} to obtain that $AW^{-1}$ also satisfies robust NSP of order $s$ with parameters $\Tilde{\rho} = \kappa \rho \sim 1$ and $\Tilde{\tau}\sim \|W\|\frac{1}{pm} \sim \frac{1}{pm}$. Now we apply Theorem \ref{Thm:4.25-Fourcart/Rauhut} to obtain that, for vectors $Wz$ and $Wx$, the following holds
\begin{equation*}
\begin{split}
\|W(z-x)\|_1 &\lesssim \|Wz\|_1-\|Wx\|_1 + \sigma_s(Wx)_1 + \|W\| \tau \|A(x-z)\|_1\\
& \lesssim \|t\|_{\infty}\|A(z-x)\|_1 + \|W\|\sigma_s(x)_1 + \|W\| \tau \|A(x-z)\|_1\\
& = \|W\|\sigma_s(x)_1 +\|A(z-x)\|_1 (\frac{1}{pm}+\|W\|\tau)
\end{split}
\end{equation*}
The last inequality follows from Lemma \ref{lemma6-kueng} and the fact that $\sigma_s(Wx)_1\le \|W\|\sigma_{s}(x)_1$. Now we obtain
\begin{equation*}
\begin{split}
\|x-z\|_1 & \le \|W^{-1}\|\|W(z-x)\|_1 \\
& \lesssim \|W^{-1}\|(\|W\|\sigma_s(x)_1 +\|A(z-x)\|_1 (\frac{1}{pm}+\|W\|\tau))\\
&\lesssim \kappa \sigma_s(x)_1 + \|A(z-x)\|_1 \left(\frac{\|W^{-1}\|}{pm}+\kappa\tau\right) \\
&\lesssim \sigma_s(x)_1 + \frac{1}{pm}\|A(z-x)\|_1.
\end{split}
\end{equation*}
To get the first result, choose $z=z^{\ast}$ to be the minimizer of the optimization program \eqref{eq:l1-LS}, clearly $\|A(z^{\ast}-x)\|_1 \le \|Az^{\ast}-y\|_1 + \|Ax-y\|_1 \le 2\|e\|_1$. The second results follows from the same arguments with an extra application of Cauchy-Schwarz inequality.

\end{proof}

\section{Conclusion}
\label{sec:conclusion}
In this work, we analysed the robust null space property of a Bernoulli $p$ matrix in the practically relevant regime when $p$ vanishes. We derived sharp bounds for the minimal number of measurements required to establish robust nullspace property of Bernoulli $p$ matrices for a wide range of sparsity levels $s$. We formulated a precise conjecture that reveals a phase transition not observed in the previous works. Moreover, we provide applications of our main result in nonnegative compressed sensing.

\section{Acknowledgement}
The author would like to thank Afonso Bandeira, Hendrik Peterson and Nikita Zhivotovsky for helpful discussions. The author is grateful to Geoffrey Chinot and Sara van de Geer for a careful revision of the manuscript.

\section{Appendix}
\label{sec:appendix}
The appendix is dedicated to the proof of Theorem \ref{thm:important1}. In this section, we assume that the implicit graph is an $(s,d,\delta,\theta)$ quasi regular lossless expander. To start, we present two lemmas that reflect the intuition that a lossless expander posses a small number of collisions. 

\begin{lemma}
\label{lemma_for_E(J;K)}
If $J$ and $K$ are disjoint sets of left vertices satisfying $|J|+|K| \le s$, then the set 
\begin{equation*}
    E(J;K):=\{\overline{ji} \in E(K) \ \text{with} \ i\in R(J)\}
\end{equation*}
has cardinality at most $(\theta_s+3\delta)ds$.
\end{lemma}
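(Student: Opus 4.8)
The plan is to reduce the statement to a single double count of edges against distinct right-endpoints, carried out over the union $S := J \cup K$. Since $J$ and $K$ are disjoint with $|J|+|K|\le s$, the set $S$ satisfies $|S| = |J|+|K| \le s$, so the expansion property applies to $S$ itself and yields $|R(S)| \ge (1-\theta_s)d|S|$. First I would complement this with the upper bound coming from quasi-regularity: every vertex of $S$ has left degree at most $(1+\delta)d$, so the total number of edges leaving $S$ is at most $(1+\delta)d|S|$.

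Next I would introduce a collision count. For each right vertex $i$, let $m_i$ be the number of edges from $S$ incident to $i$; then $\sum_i m_i$ is exactly the number of edges leaving $S$, while the number of indices with $m_i \ge 1$ is exactly $|R(S)|$. Consequently the number of excess (colliding) edges is $\sum_i (m_i-1)^+ = \sum_i m_i - |R(S)| \le (1+\delta)d|S| - (1-\theta_s)d|S| = (\theta_s+\delta)d|S| \le (\theta_s+\delta)ds$.

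The heart of the argument is to bound $|E(J;K)|$ by this collision count. For $i \in R(J)$, write $j_i \ge 1$ for the number of edges from $J$ landing on $i$ and $k_i \ge 0$ for the number of edges from $K$ landing on $i$; because the graph is simple these contributions come from distinct left vertices, so $m_i \ge j_i + k_i \ge 1 + k_i$ and hence $m_i - 1 \ge k_i$. Summing over $i \in R(J)$ and noting that $|E(J;K)| = \sum_{i \in R(J)} k_i$ by definition, I obtain $|E(J;K)| \le \sum_{i \in R(J)}(m_i-1) \le \sum_i (m_i-1)^+ \le (\theta_s+\delta)ds$, which is in fact slightly stronger than the claimed $(\theta_s+3\delta)ds$; setting $\delta=0$ recovers the classical regular lossless-expander estimate $|E(J;K)|\le\theta_s d s$.

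I expect the only delicate point to be the bookkeeping in the final step, namely exploiting $j_i \ge 1$ for every $i \in R(J)$ (so that a $K$-edge landing in $R(J)$ genuinely produces a collision) and using simplicity of the graph so that the multiplicities $k_i$ arise from distinct left vertices. The remaining ingredients, the edge upper bound from quasi-regularity and the endpoint lower bound from expansion applied to $J \cup K$, are routine, so I would keep the write-up focused on that combinatorial identification of $E(J;K)$ with part of the excess-edge budget.
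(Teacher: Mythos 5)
Your argument is correct, and it takes a genuinely different (and cleaner) route than the paper's. The paper partitions the edges leaving $E_0=J\cup K$ into three classes $E_1,E_2,E_3$ and bounds $|E_3|=|E_0|-|E_1|-|E_2|$, which forces it to invoke quasi-regularity on $J$ twice --- once for the lower bound $|E_1|\ge(1-\delta)d|J|$ and once for the upper bound $|R(J)|\le(1+\delta)d|J|$ inside the estimate $|E_2|\ge|R(J\cup K)|-|R(J)|$ --- and this double use is exactly where the coefficient $3\delta$ on the $|J|$-term comes from. Your collision count sidesteps both of these: you only use the global edge upper bound $\sum_i m_i\le(1+\delta)d|S|$, the expansion lower bound $|R(S)|\ge(1-\theta_s)d|S|$, and the trivial observation that each $i\in R(J)$ already carries at least one $J$-edge, so every $K$-edge landing on it is an excess edge, giving $|E(J;K)|\le\sum_i(m_i-1)^+\le(\theta_s+\delta)ds$. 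This is strictly sharper than the stated $(\theta_s+3\delta)ds$ (and still specializes to the classical $\theta_s ds$ at $\delta=0$), so if propagated through the proof of Theorem~\ref{thm:important1} it would mildly improve the constants in the condition $6\theta_{2s}+19\delta<1$. The one point worth making explicit in a write-up is that $m_i=j_i+k_i$ because $J$ and $K$ are disjoint and $E$ is a set of edges (no multi-edges), which you do flag; otherwise the bookkeeping is complete.
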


\begin{proof}
We partition the set of edges emanating from $E_0:=J\cup K$, in three distinct sets:
\begin{enumerate}
    \item The set $E_1$ of edges emanating from $J$
    \item The set $E_2$ of edges emanating from $K$ and whose right vertices not connected to any left vertex in $J$
    \item The set $E_3$ of edges emanating from $K$ and whose right vertices are also connected to the left vertices in $J$
\end{enumerate}
We need to bound $E_3$. We know that $|E_0|\le (1 + \delta)d(|J|+|K|)$ and $|E_1|\ge (1 - \delta)d|J|$ because of the quasi-regularity of the graph. We get $|E_3|=|E_0|-|E_1|-|E_2|\le 2\delta d|J| +(1+\delta)d|K| -|E_2|$. We proceed to bound $E_2$. Observe that each right vertex $i \in R(K)/R(J)$ gives rise to at least one edge emanating from $K$ whose right vertex is not connected to any left vertex in $J$. We can write $|E_2|\ge |R(K)/R(J)|=|R(J\cup K)|-|R(J)|$. Since $|R(J)|\le (1+\delta)d|J|$ and $|R(J\cup K)| \ge (1-\theta_s)d(|J|+|K|)$ we can bound
\begin{equation*}
    |E_2|\ge (1-\theta_s)d(|J|+|K|) - (1+\delta)d|J| = |J|d(-\theta_s-\delta) + (1-\theta_s)d|K|.
\end{equation*}
Now we use this bound to estimate $|E_3|$,
\begin{equation*}
    |E_3|\le 2\delta d|J|+|J|d(\theta_s+\delta)+(1+\delta)d|K|-(1-\theta_s)d|K| = |J|d(\theta_s+3\delta) + |K|d(\delta+\theta_s).
\end{equation*}
Recall that $|J|+|K|\le s$, so we get $|E_3|\le sd(\theta_s+3\delta)$.
\end{proof}

\begin{lemma} 
\label{simple_lemma}
Let $S$ be a set of $s$ left vertices. For each $i\in R(S)$, if $l(i) \in S$ denotes fixed left vertex connected to $i$, then
\begin{equation*}
    E^{\ast}(S):=\{\overline{ji} \in E(S): j\neq l(i)\},
\end{equation*}
has cardinality at most $(\theta_s+\delta) ds$.
\end{lemma}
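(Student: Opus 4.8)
The plan is to reduce the estimate to a single counting identity together with the two defining properties of the quasi-regular lossless expander. The key observation is that $E^{\ast}(S)$ is obtained from the full edge set $E(S)$ by deleting, for every right vertex $i\in R(S)$, exactly one edge, namely the representative edge $\overline{l(i)\,i}$. Since $l(i)\in S$ by construction, each such representative edge indeed belongs to $E(S)$, and distinct right vertices yield distinct representative edges because they differ in their right endpoint. Thus the map $i\mapsto \overline{l(i)\,i}$ is an injection from $R(S)$ into $E(S)$ whose image is precisely the complement of $E^{\ast}(S)$ inside $E(S)$. This gives the identity
\[
|E^{\ast}(S)| = |E(S)| - |R(S)|.
\]

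First I would bound $|E(S)|$ from above. Because $|S|=s$ and the graph is $(\delta,d)$-left quasi-regular, every left vertex in $S$ has degree at most $(1+\delta)d$, so summing over the $s$ vertices of $S$ yields $|E(S)|\le (1+\delta)ds$. Next I would bound $|R(S)|$ from below using the expansion property applied to the set $S$ itself, which is admissible since $|S|=s$; this gives $|R(S)|\ge (1-\theta_s)ds$. Substituting both bounds into the identity above, I obtain
\[
|E^{\ast}(S)| = |E(S)| - |R(S)| \le (1+\delta)ds - (1-\theta_s)ds = (\theta_s+\delta)ds,
\]
which is exactly the claimed estimate.

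The argument is essentially immediate once the counting identity is established, so there is no serious obstacle. The only point requiring a moment of care is the verification that $i\mapsto\overline{l(i)\,i}$ is a well-defined injection into $E(S)$, so that precisely $|R(S)|$ edges are removed and no representative edge is double-counted; this is where the assumption $l(i)\in S$ is used and where the combinatorial structure of the quasi-regular expander enters. After that, the bound follows by pairing the upper bound on the left degree with the lower bound supplied by the expansion property.
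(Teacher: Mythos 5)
Your proof is correct and follows essentially the same route as the paper: the paper likewise partitions $E(S)$ into $E^{\ast}(S)$ and the set of representative edges $\{\overline{l(i)i}: i\in R(S)\}$, then combines the upper bound $|E(S)|\le(1+\delta)ds$ from quasi-regularity with the lower bound $|R(S)|\ge(1-\theta_s)ds$ from expansion. Your explicit verification that $i\mapsto\overline{l(i)\,i}$ is an injection is a welcome extra detail but does not change the argument.
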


\begin{proof}
The set $E(S)$ of the edges emanating from $S$ can be partitioned into two sets $E^{\ast}(S)$ and $E_c^{\ast}(S):=\{\overline{l(i)i}, i\in R(S)\}$. Since $|E(S)|\le (1+\delta)ds$ and $E_c^{\ast}(S) \ge (1-\theta_s)ds$, we get the desired bound.
\end{proof}

The next two lemmas are about technical properties of lossless expanders.

\begin{lemma}
If $S$ and $T$ are two disjoint subsets of $[n]$ and if $x\in \mathbb{R}^n$, then
\begin{equation*}
    \|(Ax_S)_{R(T)}\|_1 \le (\theta_{s+t}+3\delta)d(s+t)\|x_S\|_{\infty},
\end{equation*}
where $s=|S|$ and $t=|T|$.
\end{lemma}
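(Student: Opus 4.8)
The plan is to reduce the $\ell_1$ norm on the left-hand side to a pure edge-counting problem and then invoke the collision bound from Lemma \ref{lemma_for_E(J;K)}. First I would expand the norm coordinate by coordinate over the right vertices. For a right vertex $i \in R(T)$, the $i$-th coordinate of $Ax_S$ is $(Ax_S)_i = \sum_{j \in S,\, \overline{ji}\in E} x_j$, a sum over exactly those left neighbours of $i$ that lie in $S$. The triangle inequality then gives $|(Ax_S)_i| \le \|x_S\|_\infty \cdot |\{j \in S : \overline{ji}\in E\}|$, so that
\begin{equation*}
    \|(Ax_S)_{R(T)}\|_1 \le \|x_S\|_\infty \sum_{i \in R(T)} |\{j \in S : \overline{ji}\in E\}|.
\end{equation*}

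The next step is to recognize the remaining sum as the cardinality of a collision set. The quantity $\sum_{i \in R(T)} |\{j \in S : \overline{ji}\in E\}|$ counts precisely the edges $\overline{ji}$ with left endpoint $j \in S$ and right endpoint $i \in R(T)$; that is, it equals $|E(T;S)|$ in the notation of Lemma \ref{lemma_for_E(J;K)} (taking $J=T$ and $K=S$). Since $S$ and $T$ are disjoint and $|S|+|T|=s+t$, the hypotheses of that lemma are met at expansion level $s+t$, and it yields $|E(T;S)| \le (\theta_{s+t}+3\delta)d(s+t)$. Combining this with the display above gives the claimed bound.

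The argument is short, so the only real points to watch are bookkeeping rather than genuine obstacles. I would be careful that the roles of $J$ and $K$ are assigned correctly: it is the edges \emph{emanating from} $S$ whose right endpoints fall in $R(T)$, which matches $E(T;S)$ and not $E(S;T)$. I would also be careful that the expansion parameter is read off at level $s+t$ rather than at the ambient level, since the cardinalities $s=|S|$ and $t=|T|$ here play the role of the generic set sizes in Lemma \ref{lemma_for_E(J;K)}. No probabilistic or small-ball input is needed; the statement is a purely deterministic consequence of the quasi-regular expander structure, inherited entirely through the collision estimate.
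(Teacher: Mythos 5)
Your proof is correct and follows essentially the same route as the paper: expand the $\ell_1$ norm over right vertices, bound each coordinate by $\|x_S\|_\infty$ times the number of neighbours of $i$ in $S$, identify the resulting count as the collision set of Lemma \ref{lemma_for_E(J;K)}, and apply that lemma at level $s+t$. Your care over the ordering of the arguments is warranted --- the set is indeed $E(T;S)$ in the notation of that lemma (the paper's proof writes $E(S;T)$, a harmless notational slip since the final bound depends only on $|J|+|K|$).
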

\begin{proof}
We estimate the term $\|(Ax)_{R(T)}\|_1 $ as follows
\begin{equation*}
    \begin{split}
        \|(Ax)_{R(T)}\|_1 &= \sum_{i\in R(T)}|(Ax_S)_i| = \sum_{i\in R(T)}\mathds{1}_{i\in R(T)}|\sum_{j \in S}A_{ij}x_j|\\
        & \le \sum_{i=1}^m \mathds{1}_{i\in R(T)} \sum_{j\in S}\mathds{1}_{\{\overline{ji} \in E\}} |x_j| = \sum_{\overline{ji} \in E(S;T)}|x_j|\\
        &\le |E(S;T)|\|x_{S}\|_{\infty}.
    \end{split}
\end{equation*}
We now apply Lemma \ref{lemma_for_E(J;K)} to conclude the proof.
\end{proof}

\begin{lemma}
\label{last_lemma}
Given a $s$-sparse vector $w \in \mathbb{R}^n$, let $w^{\ast}$ be defined by $w^{\ast}_{i}:=w_{l(i)}$ where $l(i):=\arg\max_{j\in [n]}\{w_{ij}, \ \overline{ji}\in E\}$. Then,
\begin{equation*}
    \|Aw-w^{\ast}\|_{1} \le (\theta_{s}+\delta)d\|w\|_1.
\end{equation*}
\end{lemma}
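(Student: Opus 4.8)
The plan is to reduce the claim to a statement about edges and then exploit the \emph{maximising} choice of $l(i)$ through a layer-cake argument. Writing $S$ for the support of $w$ (so $|S|\le s$), observe that for each right vertex $i$ one has $(Aw)_i=\sum_{j:\overline{ji}\in E}w_j$ and $w^{\ast}_i=w_{l(i)}$, whence $(Aw-w^{\ast})_i=\sum_{j:\,\overline{ji}\in E,\ j\neq l(i)}w_j$. Only neighbours $j\in S$ contribute to this sum, so the triangle inequality gives
\[
\|Aw-w^{\ast}\|_1\le\sum_{i}\ \sum_{\substack{j\in S,\ \overline{ji}\in E\\ j\neq l(i)}}|w_j|=\sum_{\overline{ji}\in E^{\ast}(S)}|w_j|,
\]
with $E^{\ast}(S)$ as in Lemma \ref{simple_lemma}. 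Bounding the right-hand side crudely by $\|w\|_{\infty}\,|E^{\ast}(S)|$ and invoking Lemma \ref{simple_lemma} would only yield something of the form $(\theta_s+\delta)ds\,\|w\|_{\infty}$, which carries the wrong normalisation; the whole point is to recover $\|w\|_1$ instead, and for this the fact that $l(i)$ is a maximiser is essential.

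First I would pass to super-level sets via the identity $|w_j|=\int_0^{\infty}\mathds{1}_{\{|w_j|>t\}}\,dt$. Exchanging the (finite) sum over edges with the integral turns the bound into
\[
\|Aw-w^{\ast}\|_1\le\int_0^{\infty}\bigl|\{\overline{ji}\in E^{\ast}(S):|w_j|>t\}\bigr|\,dt.
\]
For a fixed level $t$, set $S_t:=\{j\in S:|w_j|>t\}$. The key observation is that the edge set inside the integrand is exactly $E^{\ast}(S_t)$: if $\overline{ji}\in E$ with $j\in S_t$ and $j\neq l(i)$, then $|w_{l(i)}|\ge|w_j|>t$, so the maximiser $l(i)$ itself lies in $S_t$ and is still the heaviest neighbour of $i$ inside $S_t$. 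Hence Lemma \ref{simple_lemma}, applied to the set $S_t$, gives $|E^{\ast}(S_t)|\le(\theta_{|S_t|}+\delta)d\,|S_t|\le(\theta_s+\delta)d\,|S_t|$, where the last step uses the monotonicity $\theta_{|S_t|}\le\theta_s$ built into the definition of $\theta_s$ as the expansion constant over sets of size at most $s$.

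Finally I would integrate this estimate and apply the layer-cake identity once more in reverse:
\[
\|Aw-w^{\ast}\|_1\le(\theta_s+\delta)d\int_0^{\infty}|S_t|\,dt=(\theta_s+\delta)d\sum_{j\in S}|w_j|=(\theta_s+\delta)d\,\|w\|_1,
\]
which is precisely the asserted inequality. Equivalently, one may sort the coordinates so that $|w_{j_1}|\ge\cdots\ge|w_{j_s}|$, let $S_{(l)}$ denote the $l$ heaviest of them, and run Abel summation against the partial sums $\sum_{k\le l}c_{j_k}=|E^{\ast}(S_{(l)})|$, where $c_j:=|\{i:\overline{ji}\in E,\ l(i)\neq j\}|$; this is just the discrete shadow of the same computation. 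I expect the only genuinely delicate point to be the identification of the level-$t$ edge set with $E^{\ast}(S_t)$: it is exactly there that the maximising definition of $l(i)$ together with the monotonicity of $\theta_s$ are used, and without the maximiser the super-level sets would fail to nest correctly, leaving no way to upgrade $\|w\|_{\infty}$ to $\|w\|_1$.
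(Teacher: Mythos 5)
Your proof is correct and is essentially the paper's argument in different clothing: the layer-cake integral over super-level sets $S_t$, combined with Lemma \ref{simple_lemma} applied at each level, is exactly the continuous form of the paper's summation-by-parts against the partial sums $C_k=|E^{\ast}([k])|$ on the sorted coordinates, as you yourself note. You also make explicit the one point the paper leaves implicit, namely that the maximising choice of $l(i)$ forces $l(i)\in S_t$ so that Lemma \ref{simple_lemma} is legitimately applicable to each level set.
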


\begin{proof}
Without loss of generality, assume that the entries of $w$ are ordered in a non-increasing order. The edge $\overline{l(i)i}$ can be though of as the first edge arriving at the right vertex $i$. We denote the support of the vector $w$ by $S$ and write
\begin{equation*}
    (Aw-w^{\ast})_i = \sum_{j=1}^n A_{ij}w_j - w_{l(i)} = \sum_{j\in S}\mathds{1}_{\{\overline{ji}\in E \ \text{and} \ j\neq l(i)\}}w_j.
\end{equation*}
Now we write,
\begin{equation*}
    \|Aw - w^{\ast}\|_1 = \sum_{i=1}^m |\sum_{j \in S}\mathds{1}_{\overline{ji}\in E; j\neq l(i)}w_j| \le \sum_{j\in S}(\sum_{i=1}^m \mathds{1}_{\overline{ji}\in E; j\neq l(i)})|w_j| := \sum_{j=1}^s c_j |w_j|
\end{equation*}
Thus we get $\|Aw-w^{\ast}\|_1 = \sum_{j=1}^s c_jw_j$, where $c_j:= \sum_{i=1}^m \mathds{1}_{\{\overline{ji}\in E \ \text{and} \ j\neq l(i)\}}$. For all $k\in [s]$, we have
\begin{equation*}
    C_k:=\sum_{j=1}^k c_j = |\{\overline{ji} \in E([k]), j\neq l(i)\}| \le (\theta_s+\delta)dk,
\end{equation*}
where the last inequality follows from Lemma \ref{simple_lemma}. We now perform summation by parts (assuming $C_0:=0$),
\begin{equation*}
    \sum_{j=1}^s c_j w_j \le \sum_{j=1}^s C_j|w_j|-C_{j-1}|w_{j}|=\sum_{j=1}^{s-1}C_j(|w_j|-|w_{j+1}|) + |C_s w_s|
\end{equation*}
Since $|w_j|-|w_{j+1}|\ge 0$ due to the non-increasing order, we can plug the bound for $C_k$ into the summation above to get
\begin{equation*}
    \sum_{j=1}^s c_jw_j \le \sum_{j=1}^s (\theta_s+\delta) d|w_j|=(\theta_s+\delta)d\|w\|_1.
\end{equation*}
\end{proof}

We are now in position to prove Theorem \ref{thm:important1}. 
\begin{proof}{(of Theorem \ref{thm:important1})}
Let $v\in \mathbb{R}^n $ be a fixed vector. We define $S_0$ to be the set of $s$ largest entries of $v$ in absolute value, $S_1$ to be the set of $s$ largest entries not in $S_0$. The sets $S_2,S_3,\ldots,S_{\lceil\frac{n}{s}\rceil}$ are defined analogously. Now, we write

\begin{equation}
\label{important_equality}
\begin{split}
   (1-\delta) d\|v_{S_0}\|_1 = (1-\delta)d\sum_{j \in S_0}|v_j| &\le \sum_{\overline{ji}\in E(S_0)} |v_j| = \sum_{i \in R(S_0)}\sum_{\{j\in S_0,\overline{ji}\in E\}}|v_j|\\
    &= \sum_{i \in R(S_0)} |v_{l(i)}| + \sum_{i\in R(S_0)}\sum_{\{j\in S_0 \backslash \{l(i)\},\overline{ji}\in E\}}|v_j|.
    \end{split}
\end{equation}
Moreover, for $i\in R(S_0)$,
\begin{equation*}
    (Av)_i =\sum_{j\in [n]} A_{ij}v_j = v_{l(i)} + \sum_{\{j\in S_0 \backslash l(i),\overline{ji}\in E \}} v_j + \sum_{k\ge 1}\sum_{\{j\in S_k,\overline{ji}\in E \}} v_j.
\end{equation*}
It follows that
\begin{equation*}
    |v_{l(i)}|\le \sum_{\{j\in S_0/l(i),\overline{ji}\in E \}} v_j + \sum_{k\ge 1}\sum_{\{j\in S_k,\overline{ji}\in E \}} v_j + |(Av)_i|
\end{equation*}
Summing over all $i\in R(S_0)$ and using equation \eqref{important_equality},
\begin{equation*}
    (1-\delta)d\|v_{s_0}\|_1 \le \sum_{i\in R(S_0)}\left (2\sum_{\{j\in S_0/l(i),\overline{ji}\in E \}} v_j + \sum_{k\ge 1}\sum_{\{j\in S_k,\overline{ji}\in E \}} v_j \right ) + \|Av\|_1.
\end{equation*}
In order to handle the first term in the right hand side, we apply Lemma \ref{last_lemma} with $w=v_{S_0}$ to obtain that 
\begin{equation*}
    2\sum_{i\in R(S_0)}\sum_{\{j\in S_0/l(i),\overline{ji}\in E \}} v_j = 2\|Aw-w^{\ast}\|_1 \le 2(\theta_s+\delta)d\|v_{S_0}\|_1.
\end{equation*}

\noindent For the second term, Lemma \ref{lemma_for_E(J;K)} implies that

\begin{equation*}
\begin{split}
    \sum_{k\ge 1}\sum_{i\in R(S_0)}\sum_{j\in S_k,\overline{ji}\in E}|v_j|&=\sum_{k\ge 1}\|(Av_{S_k})_{R(S_0)}\|_1 \le \sum_{k\ge 1}2(\theta_{2s}+3\delta)ds\|v_{S_k}\|_{\infty}\\
    &\le 2(\theta_{2s}+3\delta_2)d\sum_{k\ge 1}\|v_{S_{k-1}}\|_1= 2(\theta_{2s}+3\delta)d\|v\|_1.
    \end{split}
\end{equation*}
Where in the last inequality we used that $s\|v_{S_k}\|_{\infty}\le \|v_{S_{k-1}}\|_1 $. Putting the bounds together,
\begin{equation*}
\begin{split}
    (1-\delta)d\|v_{S_0}\|_1 &\le 2(\theta_s+ \delta)d \|v_{S_0}\|_1 + 2(\theta_{2s}+3\delta)d\|v\|_1 + \|Av\|_1\\
    &\le 4(\theta_{2s}+3\delta)d\|v_{S_0}\|_1 + 2(\theta_{2s}+3\delta)d\|v_{{S_0}^c}\|_1 + \|Av\|_1.
\end{split}
\end{equation*}
Rearranging the inequality we conclude the proof.
\end{proof}

\bibliographystyle{abbrv}
\bibliography{References.bib}

\end{document}